\newbox{\subbox}
\newlength{\xheight}
\newlength{\subheight}
\newcommand\xheightsub[2]{%
	\savebox{\subbox}{\({}_{#2}\)}%
	\settoheight{\xheight}{\({}_x\)}%
	\settoboxheight{\subheight}{\subbox}%
	\addtolength{\subheight}{-\xheight}%
	{#1}{\raisebox{-\subheight}{\usebox{\subbox}}}%
}
\theoremstyle{plain}
\newtheorem{thm}{Theorem}[section]
\newtheorem{proposition}[thm]{Proposition}
\newtheorem{corollary}[thm]{Corollary}
\newtheorem{lem}[thm]{Lemma}
\theoremstyle{definition}
\newtheorem{definition}[thm]{Definition}
\newtheorem{remark}[thm]{Remark}
\newtheorem{example}[thm]{Example}
\newcommand{\cN}{\mathcal{N}}
\newcommand{\cS}{\mathcal{S}}
\newcommand{\cC}{\mathcal{C}}
\newcommand{\cR}{\mathcal{R}}
\newcommand{\R}{\mathbb{R}}
\newcommand{\N}{\mathbb{N}}
\newcommand{\sgn}{{\rm sgn}}
\newcommand{\fr}{\mathfrak{r}}
\newcommand{\footremember}[2]{%
	\footnote{#2}
	\newcounter{#1}
	\setcounter{#1}{\value{footnote}}%
}
\newcommand{\footrecall}[1]{%
	\footnotemark[\value{#1}]%
} 
\author{Shenghao Yao\footremember{this}{Research Centre for Computational Sciences and Mathematical Modelling, Coventry University, Coventry, United Kingdom.}%\footremember{Contact-SY}{yaos10@uni.coventry.ac.uk}
	, AmirHosein Sadeghimanesh\footrecall{this} \footremember{Contact-AS}{amirhossein.sadeghimanesh@coventry.ac.uk.}\footremember{corr}{Corresponding author.} and Matthew England\footrecall{this} %\footremember{Contact-ME}{matthew.england@coventry.ac.uk, \url{https://www.matthewengland.coventry.domains/}.}
	}
\title{Understanding Multistationarity of Fully Open Reaction Networks}
\begin{document}

\maketitle

\begin{abstract}
This work addresses multistationarity of fully open reaction networks equipped with mass action kinetics. We improve upon existing results relating existence of positive feedback loops in a reaction network and multistationarity; and we provide a novel deterministic operation to generate new non-multistationary networks. This is interesting because while there were many operations to create infinitely many new multistationary networks from a multistationary example, this is the first such operation for the non-multistationary counterpart. 

Such tools for the generation of example networks have a use-case in the application of data science to reaction network theory.  We demonstrate this by using new data, along with a novel graph representation of reaction networks that is unique up to a permutation on the name of species of the network, to train a graph attention neural network model to predict multistationarity of reaction networks. This is the first time machine learning tools are used for studying classification problems of reaction networks.
	
\textbf{Keywords:} Chemical Reaction Network, Multistationarity, Machine Learning, Graph Attention Network
\end{abstract}

\section{Introduction}
\label{sec:introduction}

\subsection{From the life sciences to CRNs}
\label{sec:introduction_subsec_1}

Many phenomena in the life sciences involve a group of species, such as chemical molecules, proteins or cells, depending on the context; and some interactions among them. Each interaction causes some of these species to be destroyed / consumed and some others to be created / produced. Thus the quantities of these species, which can be a concentration (a real number) or a count (an integer), are changing over time. This means the quantity of each species is a variable, i.e. function of time. These changes over time, caused by interactions, give rise to derivatives of the variables and equations describing these changes known as ordinary differential equations (ODEs). The set of these equations defines an ODE system. Note that in this paper we are considering continuous modelling; in the case of discrete modelling, one enters the realm of stochastic processes. 

To predict the behaviour of experiments in chemistry or biology, one can study their corresponding ODE system mathematically. For example, consider identifying a stable equilibrium of the experiment. Stable equilibria are reached after waiting for sufficient time for the concentrations of the species to not vary any more. By solving the algebraic equations obtained by setting the derivatives in the ODE system to zero we may identify the values of concentrations at equilibria, both the stable and unstable ones known together as steady states.

The field of study devoted to developing the mathematics which describes the changes in the concentrations of a group of species with respect to the interactions among them is called \emph{Chemical Reaction Network (CRN) Theory} \cite{Feinberg-2019}. We note that despite the explicit mention of the word chemical in the title of this field, the concept and the results are equally applicable to many other domains, including but not restricted to biology, epidemiology, ecology, population dynamics, and pharmacokinetics. In this paper we maintain a common language, with the interactions among the species in any such system called the \emph{reactions}, and the system itself called a \emph{reaction network}. The network is \emph{fully open} if it has inflow and outflow reactions for all species.

Some systems do not reach any steady states, such as chaotic or oscillatory models: for example, the classic non-equilibrium oscillating Belousov-Zhabotinsky reaction in chemistry \cite{Hudson-Mankin-1981}, the non-equilibrium thermodynamic system described by the Boltzmann model \cite{Diperna-Lions-1989}, and the Lotka-Volterra equations that describe the dynamic interaction between predator and prey in biology \cite{Abakuks-1982}. Some systems always have a unique steady state: for instance, the weakly reversible complex balanced reaction networks have one steady state in a closed system \cite{Horn-Jackson-1972}. Finally, some models can have more than one steady state: such models are called \emph{multistationary}. Multistationarity allows for the possibility of more than one choice in the system and is a key property in allowing mechanisms such as switch-like behaviour in cell-division or apoptosis decisions\cite{Guantes-Poyatos-2008,Sen-Bagci-Camurdan-2011}. It also enables the role of memory in cells and bacteria \cite{Kothamachu-Feliu-Cardelli-Soyer-2015}, and is crucial in designing biological or chemical circuits \cite{Thomas-Kaufman-2001,Gardner-Cantor-Collins-2000}.  Detecting multistationarity in networks is thus of great interest.

\subsection{Detecting multistationary}
\label{sec:introduction_subsec_multi}

The most well-known set of algorithms to check if a reaction network is multistationary are \emph{deficiency algorithms}. This class of tools was initially developed in the group of Feinberg \cite{Feinberg-1972,Feinberg-1987,Feinberg-1995,Ellison-1998}. Such algorithms work on a specific type of reaction networks called regular networks. While regularity may sound a natural assumption in some application domains, it does not hold for reaction networks in many other areas.

An alternative set of algorithms, which in theory work for any arbitrary reaction network, use algebraic tools \cite{Buchberger2006, Collins1998b} to find the exact semi-algebraic description of the parameter values for which the system has multiple steady states, and thus where the network is multistationary  \cite{Lichtblau-2021,Rost-Amir-2021,Rost-Amir-2023,Bradford-Davenport-England-Errami-et-al-2020}. 
Unfortunately, these algebraic algorithms have worst case computational complexity doubly exponential in the number of species and parameters of the network \cite{Bradford-Davenport-England-McCallum-Wilson-2016,Mayr-Meyer-1982,Mayr-Ritscher-2013}.  Thus, when these number more than just a few, the computation does not terminate on a computer in reasonable time; or more commonly the intermediate algebraic expression swell exhausts the memory resources \cite{Rost-Amir-2021}.

To overcome this challenge it is natural to look for patterns in smaller pieces of the network that can inform us about the multistationarity of the whole network, instead of studying the whole network directly which is computationally expensive. One such approach links the possibility of being multistationary to the existence of positive feedback loops (see Section~\ref{sec:multi_and_feedback_loops}).  The first contribution of the present paper is to improve upon the previous results making this link for the case of fully open reaction networks equipped with mass action kinetics (see Proposition~\ref{prop:positive_feedback_loops_for_fully_open}). We then use this to make our second contribution: a deterministic method to generate infinitely many new non-multistationary networks from a non-multistationary example, see Proposition~\ref{prop:changing_sc_without_loosing_NPFL}. Previous similar operations in the literature, such as extending a network by adding species or reactions, could only produce multistationary networks from a multistationary example. Our work is the first such operation for the non-multistationary counterpart.  One motivation for operations to generate networks with such properties is to build datasets that allow for the application of data science and Machine Learning (ML) to reaction network theory. 

\subsection{ML for multistationarity in CRNs}
\label{sec:introduction_subset_ML}

A problem of significant interest is to decide whether a given network is multistationary or not. This is a binary classification problem: thinking of the networks as input data that we want to classify as either ``\textit{multistationary}" or ``\textit{not multistationary}". A common approach for classification problems is to use ML techniques. 

ML tools have competed with, and enhanced, traditional methods in many research areas over recent years. An advantage of using ML is that we pay the heavy computational time cost once at the start, during the training process; then once we have a trained model, we can get predictions of the label for new input data from it almost instantaneously, for as many inputs as we need.  
The use of ML in CRN theory up to now has been limited to predicting species, complexes and reactions involved in a reaction network rather than predicting behaviour of the network, such as learning the potential energy surfaces and reaction pathways. For a comprehensive review on the current applications of ML in CRN theory see \cite{Wen-et-al-2023}. 

Our third contribution is introducing a new graph representation for reaction networks (Definition~\ref{def:SRSC_graph}) suitable for machine learning. A reaction network can be uniquely determined from this graph up to a permutation in the name of species, which does not affect whether the network is multistationary. This representation of reaction networks allows us to cope with the variable length of CRN data, and make use of graph learning algorithms. Our final contribution is generating the first large scale dataset of reaction networks labelled according to whether they exhibit multistationary (using several tools including those developed in this paper), and to demonstrate its use in training an ML classifier of multistationarity.  

The methodology developed will not only let us to address the multistationarity classification problem in fully open CRN systems with the power of ML, but also opens the path for others to use ML tools for other questions in CRN theory. 

We note that this paper focusses its attention on fully open networks.  Our initial motivation for this work was Theorem~\ref{thm:one_non_flow_Joshi_thm} (\cite[Theorem 4.1]{Joshi-2013}) where the multistationarity of a fully open network with one non-flow reaction can be decided by a simple inequality on the stoichiometric coefficients of the non-flow reaction and its direction.  Since there already exists such theoretical results providing simple rules to classify multistationarity for the fully open case, it seems like a reasonable place to focus our initial efforts.
Another reason for this focus is the requirements in the results we used to create our labelled dataset. It is certainly desirable to extend this work to create a more generic labelled dataset, but that will first require significant extra work on new mathematical understanding for such networks. We observe that this may exemplify a new working relationship between mathematicians and ML: where increasingly the ML tools may take over the work of analysing systems; and mathematicians turn their attention away from designing such algorithms directly, to instead focus on finding results that create better datasets for the ML to learn from.

\subsection{Structure of the paper}
\label{sec:paper_structure}

We continue in Section~\ref{sec:CRN_definition} with a formal definition of CRNs, and then detail approaches to detect multistationarity of a CRN in  Section~\ref{sec:multistationarity_methods}, some from the literature and some new results. We briefly discuss the limitation of applying simple ML tools such as support vector machine and random forests in Section \ref{sec:ML}, explaining that the variable length input of the problem requires more advanced ML tools. In Section~\ref{sec:GAT} we present a new graph representation for CRNs, generate a large labelled dataset (making use of the methods in Section \ref{sec:multistationarity_methods}), and present results for a graph learning algorithm (a graph attention network) to predict multistationarity. We finish with ideas for future work in Section~\ref{sec:conclusions}.

\subsection{Notation}
\label{sec:notations}

The cardinality of a set, $A$, is denoted by $|A|$. We assume that $0^0=1$, and so there is no ambiguity for the notation $x^0$ where $x$ is a real-valued variable. For any real number $r\in\R$, the sign of $r$ is denoted by $\sgn(r)$ and defined to be $1$, $-1$ or $0$ if $r$ is positive, negative or zero respectively.

\section{Chemical reaction network theory}
\label{sec:CRN_definition}

\subsection{Chemical reaction networks}
\label{sec:CRN_definition_subsec}

A \emph{chemical reaction network}, or a \emph{network} for short, is denoted by $\cN$ and  consists of three finite sets: the set of species, $\cS$, the set of complexes, $\cC$, and the set of reactions, $\cR$. Every species of $\cS$ needs to take part in at least one complex from $\cC$, and every complex of $\cC$ needs to take part in at least one reaction from $\cR$.  
The elements of $\cS$ are denoted by uppercase letters, i.e. $A$, $B$, $\dots$ or $X_1$, $X_2$, $\dots$ and an element of $\cC$ is a linear combination of species with non-negative integer coefficients. For example if $\cS=\lbrace A, B, C \rbrace$, then $A+2C$ is a complex (with the coefficient of $B$ in this particular complex set to zero). The coefficient beside each species in a complex is called a \emph{stoichiometric coefficient}. An element of $\cR$ is an ordered pair of complexes. If $y$ and $y'$ are two complexes, then the reaction defined by the ordered pair $(y, y')$ is denoted $y\ce{->}y'$. The complex on the left side of the reaction arrow is called the \emph{reactant} and the complex on the right side is called the \emph{product}. A reversible reaction is denoted as $y\ce{<=>}y'$ which is simply a shorthand for two irreversible reactions $y\ce{->}y'$ and $y'\ce{->}y$. 

The \emph{zero complex} is the complex in which the stoichiometric coefficients of all species are zero.  A \emph{flow reaction} has the zero complex as either the reactant (an \emph{inflow reaction}) or the product (an \emph{outflow reaction}).  An inflow reaction denotes the injection of species to the network and an outflow denotes the degradation of some species or its extraction from the network.

\begin{definition}\label{def:fully_open_network}
	We say that a reaction network $\cN=(\cS,\cC,\cR)$ is \emph{fully open} if 
	\[\forall X\in\cS\;\colon\;0\ce{->}X,X\ce{->}0\in\cR.\]
	The reactions $0\ce{->}X$ and $X\ce{->}0$ are called the \emph{inflow} and the \emph{outflow} of the species $X$.
\end{definition}

A directed graph can naturally be associated to a network by considering the complexes as its nodes and the reactions as its directed edges. We refer to this digraph as the \emph{CR-graph} of the network.

\begin{example}[\cite{Drexler-et-al-2019}]
	\label{ex:chemotherapy}
	Consider a simple model for chemotherapy of tumours. Let $A$ stand for tumour cells and $B$ for a drug. In a simple scenario a tumour cell divides and becomes two new tumour cells, giving reaction $A\ce{->}2A$.  Suppose the drug and a tumour cell react and the tumour cell dies. This can be expressed as a reaction with one molecule of the drug and one cell in the reactant, but only the molecule of the drug remaining in the product, i.e. $A+B\ce{->}B$. Finally, we assume the drug leaves the body so we have an outflow for $B$, $B\ce{->}0$. The three sets associated with this network are $\cS=\lbrace A, B\rbrace$, $\cC=\lbrace 0, A, 2A, B, A+B\rbrace$ and 
	\[\cR=\lbrace A\ce{->}2A, A+B\ce{->}B, B\ce{->}0\rbrace.\]
	This information can be represented via the following CR-graph with two components.
	\[A\ce{->}2A \qquad A+B\ce{->}B\ce{->}0\]
	Since there are no flow reactions for $A$ and no inflow for $B$, this is not a fully open network (Definition~\ref{def:fully_open_network}).
\end{example}

\subsection{ODEs associated to CRNs}
\label{sec:ODE_system}

The amount of each species changes over time and thus we associate a variable (technically a mathematical function with respect to time) to each species, denoted with lowercase letters. For example the amount of species $X$ at time $t$ should be denoted by $x(t)$. However, we drop the emphasis on $t$ and simply write $x$ to simplify the notation. Similarly, instead of $\frac{dx(t)}{dt}$ for the derivative of $x$ with respect to $t$, we use $\dot{x}$.

Let $\cS=\lbrace X_1,\dots,X_n\rbrace$ and consider a complex $y=\sum_{i=1}^n y_iX_i$.  We use the notation $y$ to represent both the complex and the vector of the stoichiometric coefficients $(y_1,\dots,y_n)$ interchangeably. Each time a reaction $y\ce{->}y'$ occurs, the amount of $X_i$ changes by $y'_i-y_i$ units. If the rate of occurrence of the reaction $y\ce{->}y'$ is denoted by $\rho_{y\to y'}$, then the total change in $x_i$ can be formulated as
\begin{equation}\label{eq:ODE_equation_1}
	\dot{x}_i=\sum_{y \to y'\in\cR}(y'_i-y_i)\rho_{y \to y'}.
\end{equation}

The rate of occurrence of a reaction depends on the choice of the kinetics (our assumption about how often reactions happen). The most common kinetics, which we use in this paper, is \emph{mass action kinetics} which states that the rate of occurrence of a reaction is proportional to the product of the concentration of the species on the reactant side of the reaction. 
The constant of the proportionality is called the \emph{reaction rate constant} and is denoted by $k_{y \to y'}$. Therefore $\rho_{y \to y'}=k_{y \to y'}\prod_{i=1}^n x_i^{y_i}$. 
We simply use $x^y$ to denote  $\prod_{i=1}^n x_i^{y_i}$. Then Equation~\eqref{eq:ODE_equation_1} becomes $\dot{x}_i=\sum_{y \to y' \in \cR}(y_i'-y_i)k_{y \to y'}x^y$. Using a vector notation and putting the equations for all $n$ species together, we get the following system of ordinary differential equations (ODEs):
\begin{equation}\label{eq:ODE_system}
	\dot{x}=\sum_{y \to y'\in\cR}(y'-y)k_{y \to y'}x^y.
\end{equation}

The reaction rate constants may be included in the CR-graph as labels for the corresponding directed edges.

\begin{example}
	\label{ex:chemotherapy_ODE}
	Recall the network in Example~\ref{ex:chemotherapy}. Rename the species as $X_1=A$ and $X_2=B$. Then order the reactions in the order they were above: thus instead of using $k_{A \to 2A}$ we use $k_1$. The labelled CR-graph becomes the following.
	\[X_1\ce{->[k_1]}2X_1 \qquad X_1+X_2\ce{->[k_2]}X_2\ce{->[k_3]}0.\]
	The ODE system associated to this network is
	\begin{equation}\label{eq:chemotherapy_ODE}
		\left\lbrace\begin{array}{l}
			\dot{x}_1 = k_1x_1-k_2x_1x_2\\
			\dot{x}_2 = -k_3x_2
		\end{array}\right..
	\end{equation}
\end{example}

Note that the reaction rate constants, the $k_i$s, are parameters for the ODE system and can only attain positive real values. On the other hand, the $x_i$s are variables and can have non-negative real values.

\begin{definition}\label{def:steady_states}
	Let $\cN$ be a network with $\cS=\lbrace X_1,\dots,X_n\rbrace$. For a fixed value of the reaction rate constants, we say $x^\star=(x_1^\star,\dots,x_n^\star)\in\R_{\geq 0}^n$ is a \emph{steady state} if after substituting $x=x^\star$ in Equation~\eqref{eq:ODE_system} we get $\dot{x}=0$. The polynomials on the right hand side of Equation~\eqref{eq:ODE_system} are called the \emph{steady state polynomials}, with \emph{steady state equations} being those equations obtained by setting the steady state polynomials to zero.
\end{definition}

In the case of the simple chemotherapy network in Examples~\ref{ex:chemotherapy} and \ref{ex:chemotherapy_ODE} for any choice of parameter values $k\in\R_{>0}^3$, it is readily seen from the steady state equations that the only steady state is $x=(0,0)$. However this steady state can be attained only in the case where $x_1(0)=0$. The steady state $(0,0)$ is an unstable steady state because any generic solution, no matter how close it gets to it, eventually moves away from it ($x_1$ starts diverging to infinity). Although stability of steady states is important, we do not focus on stability in the rest of this paper.

\subsection{Multistationarity}
\label{sec:multistationarity}

\begin{definition}\label{def:multistationarity}
	A fully open network with $n$ species and $r$ reactions is called \emph{multistationary} if there exists a choice of reaction rate constants, $k^\star\in\R_{>0}^r$, for which the system of steady state equations after substituting $k=k^\star$ has more than one solution in $\R_{> 0}^n$.
\end{definition}

\begin{remark}\label{rem:fully_open_no_boundary_ss}
	Note that from an application point of view restricting to non-negative real steady states is natural, as these are the ones with biological / chemical meaning. While in the CRN community it is standard to restrict to non-boundary steady states in the definition of multistationarity, in the case of fully open networks there is no need to argue as to whether we choose $\R_{\geq 0}^n$ or $\R_{> 0}^n$ as boundary steady states do not occur, which we see as follows. Assume $x^\star$ is a boundary steady state with $x_i^\star=0$. By the Hungarian Lemma \cite[Lemma 1]{Hungarian_Lemma_2020} all the terms with negative coefficients in the steady state polynomial corresponding to $\dot{x}_i$ contain the variable $x_i$ with positive exponent. At the same time, because of the inflow reaction of species $X_i$, there is a positive constant in this polynomial. That means after substituting $x_i=x^\star_i$ into this polynomial we end up with a polynomial with all coefficients positive and a constant term: this polynomial can not have any non-negative solutions on the remaining variables. This contradiction proves that the network could not have a boundary steady state to start with.
\end{remark}

The definition of multistationarity for the general case of CRNs can be found in \cite[Definition 1.1]{Sadeghimanesh-Feliu-2019}. When the network is fully open the two definitions are equivalent.\footnote{Simply because fully open networks do not have conservation laws.}

\begin{example}\label{ex:multistationairty_fully_open_example_plot}
	Consider the following network.
	\begin{align*}
		&A\ce{<=>[k_{2}][k_{1}]}0 \qquad B\ce{<=>[k_{4}][k_{3}]}0 \qquad C\ce{<=>[k_{6}][k_{5}]}0 \\
		&A+C\ce{->[k_{7}]}2A \qquad A+B\ce{->[k_{8}]}C
	\end{align*}
	Fix the reaction rate constants to the following values (which are provided by the software introduced later in Section~\ref{sec:multi_and_deficiency}):
	\begin{align*}
		&k_{1}=2.4956,\ k_{2}=1,\ k_{3}=11.1073,\ k_{4}=1,\\
		&k_{5}=1,\ k_{6}=1,\ k_{7}=3.5202,\ k_{8}=4.4904.
	\end{align*}
	Using the \texttt{RootFinding:-Isolate} command of \texttt{Maple} we solved the steady state equations to give us three solutions (given here to 3dp):
	\[(a,b,c)\approx (0.112,7.389,3.384),(0.626,2.915,2.870),(2.251,1.000,1.245).\]
	The network has more than one steady state for this choice of parameter values, which proves the multistationarity of the network.
\end{example}

\section{Approaches to detecting multistationarity}
\label{sec:multistationarity_methods}

\subsection{Examples}

We start with simple examples where the computations by hand are feasible.

\begin{example}\label{ex:detect_multi_manually}
	Consider the following fully open network with only one species.
	\begin{equation}\label{eq:toy_network_1}
		X\ce{<=>[k_2][k_1]}0\qquad 2X\ce{->[k_3]}3X.
	\end{equation}
	The ODE system associated to this network has only one equation:
	\[\dot{x}=k_1-k_2x+k_3x^2.\]
	To find whether this network is multistationary or not we should see if there exists any choice of $k\in\R_{>0}^3$ such that the steady state equation has more than one positive real solution. If we let $k=(6,5,1)$ then we find two positive real solutions $x^{(1)}=2$ and $x^{(2)}=3$. Therefore this network is multistationary.
	
	Now let us change the stoichiometric coefficient of $X$ in only one complex. Consider the following fully open network.
	\begin{equation}\label{eq:toy_network_2}
		X\ce{<=>[k_2][k_1]}0\qquad 2X\ce{->[k_3]}X.
	\end{equation}
	The ODE system associated to this network is
	\[\dot{x}=k_1-k_2x-k_3x^2.\]
	This time with a simple argument one can see that for any choice of the reaction rate constants, one of the two real solutions is always negative. So the network in this case has at most one steady state for all choices of parameters and therefore it is \emph{not} multistationary.
\end{example}

Even though in Example~\ref{ex:detect_multi_manually} the computation was straightforward, that is not the case in general. There do not exist an explicit formulae for solving polynomial equations in several variables and arbitrary degrees: to study these we usually require advanced techniques. Further, the approach needs to be algorithmic so that one can make a program to make the decision automatically. We continue this section by surveying four different approaches that are common in CRN theory for detecting multistationarity. We will also use these tools to generate datasets for ML training and validation in the subsequent sections.

\subsection{Algebraic tools}
\label{sec:multi_and_algebra}

An algebraic approach that in theory works for any given network consists of three steps. In the first step it receives the steady state polynomials of the network with the parameters treated as unknowns (not fixed to any specific values). It then uses elimination theory, for example the calculation of a Gr\"obner Basis (GB, as originally introduced by Buchberger \cite{Buchberger2006} with \cite{Cox-et-al-undergraduate-2015} a good introductory text), and returns a polynomial only involving the parameter unknowns. This polynomial is called the \emph{discriminant}. The set of vanishing points of the discriminant polynomial is a hypersurface in the parameter space. The parameter space for the fully open networks is $\mathbb{R}_{>0}^{|\mathcal{R}|}$ where $|.|$ is the cardinal and $\mathcal{R}$ is the set of reactions (both flow and non-flow reactions)\footnote{In the case of general networks the parameter space is $\mathbb{R}_{>0}^{|\mathcal{R}|+d}$ where $d$ is number of linearly independent conservation laws, i.e. $d=n-s$ where $n$ is the number of species and $s$ is the stoichiometric dimension (see Definition~\ref{def:stoichiometric_matrix_and_dimension}) of the network.}.

The second step uses  \emph{cylindrical algebraic decomposition} (CAD, originally introduced by Collins \cite{Collins1998b} with \cite{Jirstrand-1995} a good introductory text) to decompose the parameter space into a union of connected sets, each of which resides completely inside one connected component of the complement of the discriminant hypersurface. It is usual to perform only an \emph{open CAD}, which means to produce only full-dimensional open sets, accepting that the boundaries of the sets may contain unexpected behaviour.  
The CAD ensures that the number of (non-negative / positive) real solutions of the parametric polynomial system of equations is invariant in each of these connected components.  Thus it is enough to check the number of solutions of the system for a single sample point of each set. 

The third and final step is to repeatedly solve the system after substituting the values of the parameters to the sample points and count the number of solutions in each case.  The union of the regions where system has more than one non-negative / positive real solution is called the \emph{multistationarity region}. The network is multistationary if and only if the multistationarity region is non-empty. 

To read more about this approach and all its technical details see \cite{Lazard-Rouillier-2007,Moroz-PhD-thesis}; or for a simple explanation of how to use this algorithm in the context of multistationarity see \cite{Rost-Amir-2021,Sadeghimanesh-England}.  Additional papers that have developed this approach further include \cite{Rost-Amir-2023,Lichtblau-2021,Bradford-Davenport-England-Errami-et-al-2020}. This algorithm is also implemented in the \texttt{Maple} package \texttt{RootFinding:-Parametric} \cite{RootFinding-package}. We refer to the whole process --- computing the discriminant variety, then the CAD, then checking the number of solutions in each cell, then finally making a conclusion based on observing a sample point with more than one solution, or observing over all cells the system has at most one solution --- altogether as the \emph{algebraic approach}.

\subsection{Deficiency algorithms}
\label{sec:multi_and_deficiency}

The best known software in the CRN theory community is \texttt{CRNToolbox} \cite{CRNToolbox-2018}. This software uses deficiency theorems and algorithms to decide multistationarity of CRNs. 
We note this family of algorithm can not be applied to all networks, only those which satisfy particular properties.  
We refer the interested reader to \cite[Sections 2 and 3]{Joshi-2013} and \cite{Feinberg-1972,Feinberg-1987,Feinberg-1995,Ellison-1998,Haixia-2011} for the algorithms used in \texttt{CRNToolbox}.

\subsection{Inheritance of multistationarity via network extension operators}
\label{sec:multi_and_inheritance}

Another approach that attracted the attention of CRN theory researchers is to find under which conditions one can conclude properties of a CRN based on a smaller part of the network, i.e. when the larger network \emph{inherits} the property from the smaller. Since the computations for the smaller networks take less time, this can save a lot of computational effort. Some of these works involve removing reactions or removing species, or both, or partitioning these sets to smaller sets, see \cite{Banaji-2018,Banaji-Boros-Hofbauer-2023,Banaji-Pantea-2018,Feliu-Wiuf-2013,Sadeghimanesh-Feliu-2019,Banaji_2023}.

\begin{definition}\label{def:embedded_network}
	Let $\cN=(\cS,\cC,\cR)$ be a network, $S\subset\cS$ and $R\subset\cR$. Then perform the following steps in the order given.
	\begin{enumerate}
		\item Remove from $\cR$ any reactions not in $R$.
		\item Remove from $\cS$ any species not in $S$.
		\item Change the stoichiometric coefficients of the species that do not belong to $S$ in all complexes to 0 and update $\cC$ and $\cR$ accordingly. These are sets, so we do not keep duplicates.
		\item Remove any reaction $y \ce{->} y$ from $\cR$.
		\item Remove any complex that is not involved in any reaction from $\cC$.
		\item Remove any species that is not involved in any complex from $\cS$.
	\end{enumerate}
	The updated sets $\cS$, $\cC$ and $\cR$ define a new network that we denote by $\cN|_{S,R}$ and call the \emph{embedded subnetwork} of $\cN$ induced by $S$ and $R$, (alternatively it can be called the embedded subnetwork of $\cN$ after removal of $\cS-S$ and $\cR-R$ from $\cN$). The network $\cN$ is called an \emph{extension} of $\cN|_{S,R}$.
\end{definition}

\begin{example}\label{ex:embedded_subnetworks}
	Consider the network $\cN_1$ with the CR-graph as follows. It has 7 species, 9 complexes and 8 reactions.
	
	\[\begin{array}{c}
		X_1+X_2\ce{->[k_1]}X_1+X_3\ce{->[k_2]}X_1+X_4\ce{->[k_3]}X_1+X_5\ce{->[k_4]}X_1+X_6\\
		X_3+X_4\ce{->[k_5]}X_1\ce{<=>[k_6][k_7]}X_7\ce{->[k_8]}X_2+X_5
	\end{array}\]
	
	We want to generate an embedded subnetwork, $\cN_2$, from $\cN_1$ by removing two species $X_3$ and $X_4$ and two reactions: the 4th and the 7th. Applying the first step the CR-graph becomes the following.
	
	\[\begin{array}{c}
		X_1+X_2\ce{->}X_1+X_3\ce{->}X_1+X_4\ce{->}X_1+X_5\qquad X_1+X_6\\
		X_3+X_4\ce{->}X_1\ce{->}X_7\ce{->}X_2+X_5
	\end{array}\]
	
	\noindent Applying Steps 2 and 3 then gives us the following CR-graph.
	
	\[\begin{array}{c}
		X_1+X_2\ce{->}X_1\ce{->}X_1\ce{->}X_1+X_5\qquad X_1+X_6\\
		0\ce{->}X_1\ce{->}X_7\ce{->}X_2+X_5
	\end{array}\]
	
	\noindent Applying Step 4 updates the CR-graph as follows.
	
	\[\begin{array}{c}
		X_1+X_2\ce{->}X_1\ce{->}X_1+X_5\qquad X_1+X_6\\
		0\ce{->}X_1\ce{->}X_7\ce{->}X_2+X_5
	\end{array}\]
	
	\noindent Finally after Steps 5 and 6 we have $\cN_2$ below.
	
	\[\begin{array}{c}
		X_1+X_2\ce{->}X_1\ce{->}X_1+X_5\\
		0\ce{->}X_1\ce{->}X_7\ce{->}X_2+X_5
	\end{array}\]
	
	Note that even though the species $X_6$ was not requested to be removed, as the result of removing the fourth reaction in the original network, this species automatically is removed. The new network has 4 species, 6 complexes and 5 reactions. Also note that the zero complex and the complex $X_1$ (containing only $X_1$ with stoichiometric coefficient one) did not exist in the original network, but by removal of species 3 and 4 they are generated. The CR-graph of $\cN_2$ has only one linkage class while $\cN_1$ had two.
	
	\[\xymatrix @C=1pc @R=1pc{
		& 0 \ar@{>}[d] & \\
		X_1+X_2 \ar@{>}[r] & X_1 \ar@{>}[r] \ar@{>}[d] & X_1+X_5\\
		& X_7 \ar@{>}[r] & X_2+X_5
	}\]\smallskip
	
\end{example}

\begin{thm}\label{thm:inheritance}
	\emph{(\cite[Corollary 4.6]{Joshi-Shiu-2013}).} Let $\cN$ be a fully open network and $\cN'$ a fully open embedded subnetwork of $\cN$. If $\cN'$ is multistationary, then $\cN$ is multistationary as well.
\end{thm}

This theorem tells us that by adding reactions or species to a fully open multistationary network we again obtain a multistationary network. However, we can not conclude anything about multistationarity of a subnetwork of a multistationary network. Similarly we can conclude that removing reactions and species from a fully open non-multistationary network, gives us a non-multistationary network, but we can not conclude anything about multistationarity of an extension of a fully open non-multistationary network.

\begin{example}\label{ex:inheritance_of_multistationarity}
	Since Network \eqref{eq:toy_network_1} is multistationary, the following network is also multistationary by Theorem~\ref{thm:inheritance}.
	\[X_1\ce{<=>}0\ce{<=>}X_2\quad 2X_1+X_2\ce{->}3X_1\]
\end{example}

A network is called an \emph{atom of multistationarity} if it is multistationary and all of its embedded subnetworks are non-multistationary.

\subsection{Positive feedback loops}
\label{sec:multi_and_feedback_loops}

The last approach that we will discuss is a method to preclude multistationarity. This needs a different graph representation of a CRN called the \emph{directed species-reaction graph} (DSR-graph).

\begin{definition}\label{def:influx_matrix}
	Let $\cN$ be a network with $n$ species, $X_i$, and $r$ reactions with the occurrence rates $\rho_i$. Note that each $\rho_i$ can be a function of concentrations of the species, as in the mass action kinetics. The \emph{influx matrix}, $Z=\begin{bmatrix}
		z_{i,j}
	\end{bmatrix}_{n\times r}$ is defined as a symbolic matrix with entries as
	\begin{equation}\label{eq:influx_matrix_entries}
		z_{i,j}=\sgn \left( \frac{d\rho_j}{dx_i} \right)\gamma_{i,j}.
	\end{equation}
\end{definition}

In the case of mass action kinetics (which is what we restrict ourselves to it in this manuscript) the influx matrix has a simple form stated in the following lemma whose proof is straightforward.

\begin{lem}\label{lem:influx_matrix_for_mass_action}
	Let $\cN$ be a network with $n$ species, $X_i$, and $r$ reactions, which is equipped with mass action kinetics. Equation~\eqref{eq:influx_matrix_entries} simplifies to $z_{i,j}=\gamma_{i,j}$ if $X_i$ is at the reactant side of the $j$th reaction, otherwise $z_{i,j}=0$.
\end{lem}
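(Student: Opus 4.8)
The plan is to substitute the explicit mass action form of the occurrence rates into Equation~\eqref{eq:influx_matrix_entries} and simply read off the sign of the resulting derivative. Recall from Section~\ref{sec:ODE_system} that under mass action kinetics the rate of the $j$th reaction, say $y\ce{->}y'$, is $\rho_j = k_j x^{y} = k_j\prod_{l=1}^n x_l^{y_l}$ with $k_j\in\R_{>0}$. Since $x_i$ enters $\rho_j$ only through the single factor $x_i^{y_i}$, differentiating with respect to $x_i$ gives
\[
\frac{d\rho_j}{dx_i} \;=\; k_j\, y_i\, x_i^{\,y_i-1}\prod_{l\neq i} x_l^{y_l},
\]
where, using the convention $0^0=1$ recalled in Section~\ref{sec:notations}, the factor $x_i^{\,y_i-1}$ is read as the constant $1$ when $y_i=1$, and the whole expression is understood to be $0$ when $y_i=0$.

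Then I would split into two cases. First, if $y_i=0$, i.e. $X_i$ does not appear on the reactant side of the $j$th reaction, then $\rho_j$ does not depend on $x_i$, the derivative above vanishes identically, so $\sgn\!\left(\frac{d\rho_j}{dx_i}\right)=0$ and hence $z_{i,j}=0$. Second, if $y_i\geq 1$, i.e. $X_i$ does appear on the reactant side, then $k_j>0$, $y_i\geq 1$, and the monomial $x_i^{\,y_i-1}\prod_{l\neq i} x_l^{y_l}$ is strictly positive throughout the positive orthant $\R_{>0}^n$; hence $\frac{d\rho_j}{dx_i}$ is a positive function there, $\sgn\!\left(\frac{d\rho_j}{dx_i}\right)=1$, and so $z_{i,j}=\gamma_{i,j}$. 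Combining the two cases gives the claim.

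As the statement itself signals, there is essentially no obstacle here; the only two points worth a remark are (i) that the sign appearing in Definition~\ref{def:influx_matrix} must be understood as the sign of $\frac{d\rho_j}{dx_i}$ viewed as a function on $\R_{>0}^n$, which under mass action is either a positive monomial or identically zero and therefore has a well-defined, constant sign; and (ii) the convention $0^0=1$, which is exactly what makes the displayed formula for $\frac{d\rho_j}{dx_i}$ valid when $y_i=1$ --- so that, for instance, an outflow $X_i\ce{->}0$ (rate $k_jx_i$, derivative $k_j>0$) still yields $z_{i,j}=\gamma_{i,j}$ rather than a spurious zero.
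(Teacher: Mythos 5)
Your proof is correct and is precisely the direct computation the paper has in mind: the authors omit the argument entirely, stating only that it is ``straightforward,'' and differentiating the mass action monomial $k_j x^y$ with respect to $x_i$ and reading off the sign on $\R_{>0}^n$ is exactly that straightforward argument. Your two remarks (the sign being taken of a function that is either identically zero or everywhere positive on the positive orthant, and the $0^0=1$ convention) are sensible clarifications and introduce no gap.
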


\begin{definition}\label{def:stoichiometric_matrix_and_dimension}
	Let $\cN$ be a network. The \emph{stoichiometric matrix}, $N$, is  the matrix whose columns are the vectors $y'-y$ for the reactions $y \ce{->} y'$ of $\cN$. The rank of $N$ is called the \emph{stoichiometric dimension} of $\cN$ and is denoted by $s$.
\end{definition}

\begin{definition}\label{def:DSR-graph}
	Let $\cN$ be a network with $\cS=\lbrace X_1,\dots,X_n\rbrace$, $\cR=\lbrace\fr_1,\dots,\fr_r\rbrace$, the stoichiometric matrix $N=\begin{bmatrix}
		a_{i,j}
	\end{bmatrix}_{n\times r}$, and the influx matrix $Z=\begin{bmatrix}
		z_{i,j}
	\end{bmatrix}_{n\times r}$. The \emph{DSR-graph} is a labelled digraph with $\cS\cup\cR$ as its set of nodes. For every non-zero entry of $N$, there is an edge from $\fr_j$ to $X_i$ with $a_{i,j}$ as its label, and for every non-zero entry of $Z$, there is an edge from $X_i$ to $\fr_j$ labelled $z_{i,j}$.
	
	Additionally the \emph{signed DSR-graph} is the same as the DSR-graph with the difference that the $\gamma_{i,j}$s are removed from the labels coming from the influx matrix, and only the signs behind them remain.  Similarly the labels coming from the stoichiometric matrix are replaced with their signs, that is $\sgn(a_{i,j})$. In this representation of the signed DSR-graphs we write $+$ and $-$ instead of $1$ and $-1$ on the edges.
\end{definition}

\begin{example}\label{ex:DSR_graph_example}
	Recall Network \eqref{eq:toy_network_2} from Example~\ref{ex:detect_multi_manually}. The DSR-graph and signed DSR-graph are shown in Figure~\ref{fig:DSR_graph_example}.
	
	\begin{figure}[ht]
		\begin{tabular}{cc}
			\resizebox{0.5\textwidth}{!}{
				\begin{minipage}{0.5\linewidth}
					\[\xymatrix @C=3pc @R=2.5pc{
						& \fr_1 \ar@/_1pc/[dl]_1 \\
						X \ar@/^0.5pc/[r]^{\gamma_{1,2}} \ar@/^0.2pc/[dr]|-{\gamma_{1,3}} & \fr_2 \ar@/^0.5pc/[l]_{-1}\\
						& \fr_3 \ar@/^1pc/[ul]^{-1}
					}\]
				\end{minipage}
			}
			&
			\resizebox{0.5\textwidth}{!}{
				\begin{minipage}{0.5\linewidth}
					\[\xymatrix @C=3pc @R=2.5pc{
						& \fr_1 \ar@/_1pc/[dl]_{+} \\
						X \ar@/^0.5pc/[r]^{+} \ar@/^0.2pc/[dr]|-{+} & \fr_2 \ar@/^0.5pc/[l]_{-}\\
						& \fr_3 \ar@/^1pc/[ul]^{-}
					}\]
				\end{minipage}
			}
			\vspace{0.2cm}\\
			(a) & (b)
		\end{tabular}
		\caption{(a) DSR-graph and (b) signed DSR-graph for Network \eqref{eq:toy_network_2} of Example~\ref{ex:detect_multi_manually}.}
		\label{fig:DSR_graph_example}
	\end{figure}
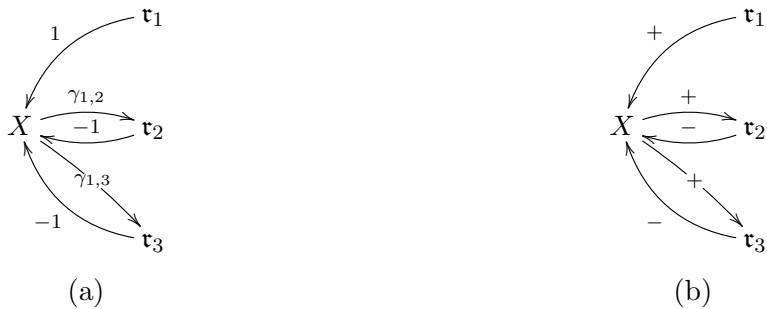
	
\end{example}

Recall that a \emph{cycle} in a graph is a closed path that does not cross itself at any vertices.

\begin{definition}\label{def:feedback_loops}
	Consider the signed DSR-graph of a network. The label or sign of a cycle is the product of labels from its edges. A cycle with positive label is called a \emph{positive feedback loop}. Similarly a \emph{negative feedback loop} is a cycle with a negative label. The length of a feedback loop is the number of species vertices used in this cycle. A feedback loop is called \emph{even} or \emph{odd} if its length is an even or odd integer respectively. Two feedback loops are called \emph{disjoint} if they have no common nodes.
\end{definition}

\begin{definition}\label{def:nucleus}
	A \emph{nucleus} of length $m$ is a set of $m$ disjoint feedback loops. Let $D=\lbrace L_1,\dots,L_m\rbrace$ be a nucleus and assume $m'$ is the number of even feedback loops in $D$. The label or sign of $D$ is defined as
	\[\sgn(D)=(-1)^{m'}\prod_{i=1}^m\sgn(L_i).\]
\end{definition}

\begin{thm}\label{thm:positive_feedback_loops}
	\emph{(\cite{Feliu-Wiuf-2015}).} Let $\cN$ be network with stoichiometric dimension $s$. If the sign of every nucleus of length $s$ is equal to $(-1)^s$, then $\cN$ is not multistationary. 	
\end{thm}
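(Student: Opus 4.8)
The plan is to connect the sign condition on nuclei of length $s$ in the signed DSR-graph to the sign structure of a suitable maximal minor of the Jacobian of the steady state polynomials, and then invoke an injectivity criterion: if the relevant Jacobian determinant is sign-definite (never zero) on the positive orthant, the steady state map is injective there, so the network cannot have two positive steady states for any choice of rate constants. Since the network is fully open with mass action kinetics, the stoichiometric dimension equals $n=|\cS|$, so there is no stoichiometric compatibility class to worry about, and we work with the full Jacobian $J=N\,\mathrm{diag}(k)\,\Lambda$, where $\Lambda=[\partial\rho_j/\partial x_i]$ has the sign pattern recorded by the influx matrix $Z$ (Lemma~\ref{lem:influx_matrix_for_mass_action}).

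First I would expand $\det(J)$ via the Cauchy--Binet formula as a sum over size-$s$ column subsets, and then expand each resulting product of minors combinatorially. The key step is the standard graph-theoretic interpretation (going back to the work behind Theorem~\ref{thm:positive_feedback_loops}): each monomial appearing in $\det(J)$ corresponds, up to sign, to a choice of a perfect matching between species and reaction nodes realised by disjoint cycles in the DSR-graph, i.e.\ to a nucleus of length $s$ together with the edges it uses. One checks that the sign of the contribution of such a term is exactly $(-1)^s$ times $\sgn(D)$ for the corresponding nucleus $D$, where the extra factor $(-1)^{m'}$ in Definition~\ref{def:nucleus} accounts for the parity bookkeeping of even versus odd cycles in the permutation-sign computation. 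Hence the hypothesis ``every nucleus of length $s$ has sign $(-1)^s$'' forces every monomial of $(-1)^s\det(J)$ to be nonnegative; since at least one nucleus contributes (otherwise $\det J\equiv 0$ and one argues separately, or the statement is vacuous), $(-1)^s\det(J)$ is a nonzero polynomial with all coefficients of one sign, so it does not vanish anywhere on $\R_{>0}^n$ (for any positive $k$).

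With sign-definiteness of $\det(J)$ in hand, I would finish by the injectivity argument: suppose $x,\tilde x\in\R_{\ge 0}^n$ are two distinct steady states for some fixed $k$. Restricting attention first to the positive orthant, the mean value theorem along the segment from $x$ to $\tilde x$ (or the more robust version due to the injectivity-via-Jacobian literature for mass action systems) yields a point where $\det(J)=0$ or where $f(x)-f(\tilde x)$ lies in a direction forbidden by the nonvanishing determinant, a contradiction. The boundary steady states require a small additional remark: for a fully open network the outflow terms $-k\,x_i$ guarantee that the polynomial vector field points inward on the coordinate hyperplanes, and one shows at most one steady state can lie on the boundary as well, or reduces boundary cases to lower-dimensional fully open subnetworks.

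The main obstacle I anticipate is the bookkeeping in the middle step: precisely matching the sign conventions of Definition~\ref{def:feedback_loops} and Definition~\ref{def:nucleus} (the product of edge labels, the $(-1)^{m'}$ factor, the convention $0^0=1$ affecting which species actually appear) against the signs produced by the Cauchy--Binet expansion and the permutation expansion of each minor. Getting this correspondence exactly right — including verifying that cycles through reaction nodes of length one (self-influence) and the flow reactions contribute as expected — is where the real work lies; the injectivity conclusion is then essentially a citation to the established framework once $\det(J)$ is known to be sign-definite.
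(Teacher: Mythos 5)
The paper does not actually prove this statement: it is quoted verbatim from \cite{Feliu-Wiuf-2015} and used as a black box, so there is no in-paper proof to compare against. Your sketch follows the strategy of that cited reference (expand $\det J$ by Cauchy--Binet, identify the monomials with nuclei of the DSR-graph, deduce sign-definiteness of $(-1)^s\det J$ on the positive orthant, conclude injectivity), and the combinatorial middle step you describe is indeed the right correspondence, with the sign bookkeeping being tedious but standard.

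There is, however, a genuine gap in your final step. For a map $f\colon\R^n\to\R^n$ with $n>1$, the mean value theorem applied along the segment from $x$ to $\tilde x$ gives, for each component $f_i$, a \emph{different} intermediate point $\xi_i$ with $\nabla f_i(\xi_i)\cdot(x-\tilde x)=0$; you cannot collect these into a single point where $\det J$ vanishes, and indeed a nowhere-singular Jacobian on a convex domain does not imply global injectivity in general. The argument that actually closes this step in the injectivity literature exploits the multiplicative structure of mass action: writing $x=z\circ e^{v}$ one has $x^{y}-z^{y}=z^{y}\bigl(e^{y\cdot v}-1\bigr)=z^{y}\,(y\cdot v)\,\theta$ with $\theta>0$ whenever $y\cdot v\neq0$, so that $f_k(x)-f_k(\tilde x)$ equals a Jacobian-type matrix evaluated at \emph{new} positive rate constants and concentrations applied to $v$; injectivity of the whole family $\{f_k\}_{k>0}$ is then equivalent to the nonvanishing of $\det\bigl(N\,\mathrm{diag}(k)\,\Lambda(x)\bigr)$ over all positive $(k,x)$, which is exactly what the nucleus condition delivers. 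Without this reduction your proof does not go through as written. The boundary issue, by contrast, is easier than you suggest: for a fully open network the inflow term makes $\dot x_i>0$ on the hyperplane $x_i=0$, so no steady state lies on the boundary of $\R_{\geq0}^n$ at all, and only positive steady states need to be counted.
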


The following lemma is implicitly proved in \cite{Feliu-Wiuf-2015}.

\begin{lem}\label{lem:sign_of_negative_feedback_loops}
	\emph{(\cite{Feliu-Wiuf-2015}).} If all feedback loops in a nucleus $D$ are negative, and the length of $D$ is $m$, then $\sgn(D)=(-1)^m$.
\end{lem}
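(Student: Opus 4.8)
The plan is to reduce the identity $\sgn(D)=(-1)^m$ to an elementary parity count. Write the nucleus as $D=\lbrace L_1,\dots,L_k\rbrace$, let $\ell_i$ be the length of $L_i$, and recall that the feedback loops constituting a nucleus are pairwise disjoint, so they share no species vertices; consequently the length of $D$ is additive over its loops, i.e.\ $m=\sum_{i=1}^k\ell_i$. Let $m'$ be the number of even feedback loops among $L_1,\dots,L_k$; by Definition~\ref{def:feedback_loops} ``even'' and ``odd'' for a loop mean exactly that $\ell_i$ is even or odd, so precisely $m'$ of the $\ell_i$ are even and $k-m'$ of them are odd.

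First I would evaluate the left-hand side straight from Definition~\ref{def:nucleus}. Since every $L_i$ is a negative feedback loop, $\sgn(L_i)=-1$ for each $i$, hence
\[
\sgn(D)=(-1)^{m'}\prod_{i=1}^k\sgn(L_i)=(-1)^{m'}(-1)^k=(-1)^{m'+k}.
\]

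Next I would compute the parity of $m$. Splitting $m=\sum_{i=1}^k\ell_i$ into its even and odd summands, the even summands contribute nothing modulo $2$, while each of the $k-m'$ odd summands contributes $1$, so $m\equiv k-m'\pmod 2$. Since $k-m'\equiv k+m'\pmod 2$, we get $(-1)^m=(-1)^{k-m'}=(-1)^{k+m'}$, which coincides with the expression for $\sgn(D)$ obtained above; this finishes the proof.

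There is no real obstacle here: the entire content is the observation that disjointness makes the length of a nucleus the sum of the lengths of its loops, combined with the triviality that a sum of integers has the same parity as the number of odd terms appearing in it. The only point deserving a line of care is confirming that the ``even/odd'' terminology of Definition~\ref{def:feedback_loops} is exactly the parity of $\ell_i$, since that is what lets us identify $m'$ (number of even loops) with the count of even summands in $\sum_i\ell_i$ and $k-m'$ with the count of odd ones.
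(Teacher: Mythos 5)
Your proof is correct. Note, however, that the paper itself gives no proof of this lemma: it is stated as being ``implicitly proved'' in the cited reference \cite{Feliu-Wiuf-2015}, so there is no in-paper argument to compare against. Your elementary parity count fills that gap cleanly: writing $D=\lbrace L_1,\dots,L_k\rbrace$ with lengths $\ell_i$, using negativity to get $\sgn(D)=(-1)^{m'+k}$, and observing $m=\sum_i\ell_i\equiv k-m'\pmod 2$ gives exactly $(-1)^m$, and I verified it agrees with the paper's implicit usage (e.g.\ the nucleus computations in the proof of Proposition~\ref{prop:positive_feedback_loops_for_fully_open}). Two small points worth flagging: you sensibly renamed the number of loops to $k$, since Definition~\ref{def:nucleus} uses $m$ for the cardinality of $D$ while the lemma reuses $m$ for its length; and your justification of additivity of length via disjointness is the right reading, since the paper never formally defines the length of a nucleus but uses the sum-of-loop-lengths convention throughout.
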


Theorem~\ref{thm:positive_feedback_loops} and Lemma~\ref{lem:sign_of_negative_feedback_loops} together give the following corollary.

\begin{corollary}\label{cor:no_positive_feedback_loop}
	\emph{(\cite{Feliu-Wiuf-2015}).} If a network does not contain any positive feedback loop, then it is not multistationary.
\end{corollary}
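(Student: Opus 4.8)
The plan is to deduce the corollary directly from Theorem~\ref{thm:positive_feedback_loops} and Lemma~\ref{lem:sign_of_negative_feedback_loops}, with essentially no extra work. First I would observe that in the signed DSR-graph every edge carries a label in $\lbrace +1,-1\rbrace$, so the label of any cycle --- being a product of such signs --- is again $\pm 1$, and in particular never zero. Consequently every feedback loop of the signed DSR-graph is either a positive feedback loop or a negative feedback loop, and the hypothesis ``$\cN$ contains no positive feedback loop'' is equivalent to ``every feedback loop of $\cN$ is negative.''

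Next I would fix the stoichiometric dimension $s$ of $\cN$ and take an arbitrary nucleus $D=\lbrace L_1,\dots,L_m\rbrace$ of length $s$. By definition a nucleus is a set of disjoint feedback loops, so each $L_i$ is a feedback loop of $\cN$, hence negative by the previous step. Applying Lemma~\ref{lem:sign_of_negative_feedback_loops} with $m=s$ gives $\sgn(D)=(-1)^s$. Since $D$ was an arbitrary nucleus of length $s$, the sign of every nucleus of length $s$ equals $(-1)^s$. If instead $\cN$ has no nucleus of length $s$ at all, this statement holds vacuously; either way the hypothesis of Theorem~\ref{thm:positive_feedback_loops} is satisfied.

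Finally I would invoke Theorem~\ref{thm:positive_feedback_loops} to conclude that $\cN$ is not multistationary, which is exactly the assertion of the corollary.

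I do not expect a genuine obstacle here: the only points needing a moment's care are the remark that cycle signs are never zero --- so that ruling out positive loops really does force every loop to be negative --- and the vacuous case where no length-$s$ nucleus exists. All the substantive content is contained in Theorem~\ref{thm:positive_feedback_loops} and Lemma~\ref{lem:sign_of_negative_feedback_loops}, which are quoted from \cite{Feliu-Wiuf-2015}, so the corollary is genuinely immediate once those two are in hand.
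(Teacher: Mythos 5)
Your proposal is correct and is exactly the derivation the paper intends: the paper gives no separate proof, stating only that Theorem~\ref{thm:positive_feedback_loops} and Lemma~\ref{lem:sign_of_negative_feedback_loops} together yield the corollary, which is precisely the two-step argument you carry out. Your added remarks (that cycle signs in the signed DSR-graph are never zero, and that the condition holds vacuously if no nucleus of length $s$ exists) are sensible bits of care that do not change the route.
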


Note that there are only two feedback loops in Example~\ref{ex:DSR_graph_example}, both negative. Therefore, by Corollary~\ref{cor:no_positive_feedback_loop}, Network \eqref{eq:toy_network_2} is not multistationary. When there exists a nucleus whose sign is different from $(-1)^s$, Theorem~\ref{thm:positive_feedback_loops} is of no use for deciding multistationarity of the network. For example it does not make any conclusion about the multistationarity of Network \eqref{eq:toy_network_1}. The signed DSR-graph is the same as in Figure~\ref{fig:DSR_graph_example}(b), but the label of the edge from $\fr_3$ to $X$ is now positive. It has stoichiometric dimension 1 and two feedback loops, one negative of length one and one positive of length one. The positive feedback loop alone forms a nucleus of length 1 with no even feedback loops so its sign is 
\[(-1)^0(+1)^1=1,\]
while $(-1)^s=(-1)^1=-1$. Thus there is a nucleus with a differing sign and Theorem~\ref{thm:positive_feedback_loops} can not rule on its non-multistationarity, at the same time we can not say it will be multistationary just because the theorem did not label it as non-multistationary.

We now go further than \cite{Feliu-Wiuf-2015} and prove the following new proposition which is simplified version of Theorem~\ref{thm:positive_feedback_loops} for fully open networks equipped with mass action kinetics.

\begin{proposition}\label{prop:positive_feedback_loops_for_fully_open}
	Theorem~\ref{thm:positive_feedback_loops} is inconclusive for a fully open network equipped with mass action kinetics if and only if there exists a positive feedback loop. I.e. when there exists a positive feedback loop, there is no need to check label of nuclei.
\end{proposition}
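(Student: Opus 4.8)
Recall that Theorem~\ref{thm:positive_feedback_loops} derives non-multistationarity from the hypothesis that every nucleus of length $s$ has sign $(-1)^s$; hence the theorem is \emph{inconclusive} precisely when there exists a nucleus $D$ of length $s$ with $\sgn(D)\neq(-1)^s$. The plan is to prove the biconditional ``such a $D$ exists $\iff$ a positive feedback loop exists'' by treating the two implications separately. For the easy direction, suppose $D$ is a nucleus of length $s$ with $\sgn(D)\neq(-1)^s$. If every feedback loop in $D$ were negative, then Lemma~\ref{lem:sign_of_negative_feedback_loops} would force $\sgn(D)=(-1)^s$, a contradiction; so $D$ contains a positive feedback loop. (This direction uses neither the fully-open nor the mass-action hypothesis.)

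For the reverse direction I would first record the structural fact that is special to fully open mass-action networks: for each species $X_i$ its outflow reaction $\fr=(X_i\ce{->}0)$ contributes, in the signed DSR-graph, an edge $X_i\ce{->}\fr$ of sign $+$ (by Lemma~\ref{lem:influx_matrix_for_mass_action}, since $X_i$ is a reactant of $\fr$) and an edge $\fr\ce{->}X_i$ of sign $-$ (the stoichiometric entry being $-1$). Thus $\Lambda_i\colon X_i\ce{->}\fr\ce{->}X_i$ is a negative feedback loop of length $1$. Moreover the DSR-graph is bipartite between $\cS$ and $\cR$, so cycles alternate species and reaction nodes; the only cycle through the node $\fr$ is $\Lambda_i$ itself, because $\fr$ has exactly one in-neighbour and one out-neighbour, both equal to $X_i$, and a cycle is simple; and inflow reaction nodes lie on no cycle at all, having no incoming DSR-edge.

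Now suppose a positive feedback loop $L$ exists, with species set $T$, so $|T|=\ell$ with $1\le\ell\le n=s$ (the last equality because the stoichiometric dimension of a fully open network equals its number of species). By the previous paragraph the reaction nodes of $L$ include no inflow reaction and no outflow reaction (an outflow would force $L=\Lambda_i$, which is negative). Hence
\[
D \;=\; \{L\}\cup\{\,\Lambda_j : X_j\in\cS\setminus T\,\}
\]
is a collection of pairwise disjoint feedback loops, i.e.\ a nucleus, of length $\ell+(n-\ell)=s$. Writing $m'$ for the number of even loops in $D$: each $\Lambda_j$ has odd length $1$, so $m'=1$ if $\ell$ is even and $m'=0$ if $\ell$ is odd, and with $\sgn(L)=+1$, $\sgn(\Lambda_j)=-1$ one gets $\sgn(D)=(-1)^{m'}(-1)^{\,n-\ell}$. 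A short parity check in the cases $\ell$ even and $\ell$ odd gives $\sgn(D)=(-1)^{n+1}=-(-1)^{s}\neq(-1)^s$ in both. So the hypothesis of Theorem~\ref{thm:positive_feedback_loops} fails and the theorem is inconclusive, completing the reverse direction.

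I expect the main obstacle to be the combinatorial bookkeeping in the reverse direction rather than any deep idea: one must argue carefully that a positive feedback loop cannot pass through a flow-reaction node, so that the length-$1$ loops $\Lambda_j$ genuinely form a nucleus disjoint from $L$ of the exact length $s=n$, and then keep the sign/parity computation straight across the even and odd cases. The only external inputs are Lemma~\ref{lem:influx_matrix_for_mass_action}, Lemma~\ref{lem:sign_of_negative_feedback_loops}, and the already-noted identity $s=n$ for fully open networks.
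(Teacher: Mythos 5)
Your proposal is correct and follows essentially the same route as the paper's proof: the forward direction is the contrapositive application of Lemma~\ref{lem:sign_of_negative_feedback_loops}, and the reverse direction builds the same nucleus consisting of the given positive loop together with the length-one negative outflow loops of the remaining species, then runs the same parity computation on its sign. The only difference is that you explicitly verify that a positive feedback loop cannot pass through a flow-reaction node (so the nucleus is genuinely disjoint), a point the paper leaves implicit; this is a welcome bit of extra care but not a change of method.
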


Before presenting the proof of Proposition~\ref{prop:positive_feedback_loops_for_fully_open}, we include a straightforward corollary of the result.

\begin{corollary}\label{cor:smallest_positive_feedback_loops}
	If a fully open network equipped with mass action kinetics contains a reaction with a species on both sides of it, but with higher stoichiometric coefficient on the product side, then Theorem~\ref{thm:positive_feedback_loops} is inconclusive.
\end{corollary}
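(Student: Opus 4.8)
The plan is to read the corollary off Proposition~\ref{prop:positive_feedback_loops_for_fully_open}. Since the network is fully open and equipped with mass action kinetics, that proposition tells us Theorem~\ref{thm:positive_feedback_loops} is inconclusive as soon as the signed DSR-graph contains a single positive feedback loop. So the whole task reduces to producing such a loop, and the hypothesised reaction will produce one of length one.

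Concretely, write $\fr$ for a reaction $y\ce{->}y'$ in which some species $X_i$ occurs on both sides, say with $y_i\ge 1$ on the reactant side and $y'_i>y_i$ on the product side. First I would identify the edge from $X_i$ to $\fr$: because $X_i$ lies on the reactant side of $\fr$, Lemma~\ref{lem:influx_matrix_for_mass_action} gives a nonzero influx-matrix entry $z_{i,\fr}$, so the signed DSR-graph has an edge $X_i\to\fr$, and its sign is $+$ (for mass action the influx of a reactant species into a reaction is positive). Next I would identify the edge from $\fr$ back to $X_i$: the $(i,\fr)$ entry of the stoichiometric matrix is the net change $y'_i-y_i$, which is strictly positive by hypothesis, so there is an edge $\fr\to X_i$ with sign $\sgn(y'_i-y_i)=+$. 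The closed path $X_i\to\fr\to X_i$ is then a cycle through exactly one species vertex whose label is $(+)(+)=+$, i.e.\ a positive feedback loop of length one. Applying Proposition~\ref{prop:positive_feedback_loops_for_fully_open} finishes the proof.

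This argument has no real obstacle; the only thing to watch is that the two halves of the hypothesis are used in different places and are both essential. That $X_i$ sits on the reactant side is precisely what creates the edge $X_i\to\fr$ (via Lemma~\ref{lem:influx_matrix_for_mass_action}); that its coefficient is strictly larger on the product side is precisely what makes the return edge $\fr\to X_i$ carry the sign $+$ rather than $-$, hence the loop positive rather than negative. Drop either condition and one gets, respectively, no edge $X_i\to\fr$ at all or a negative loop, and the conclusion fails — which explains why the statement is phrased with both requirements.
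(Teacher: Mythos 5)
Your proposal is correct and follows essentially the same route as the paper: both exhibit the length-one positive feedback loop $X_i\to\fr\to X_i$ arising from the reactant-side occurrence (positive influx-matrix entry) and the strictly larger product-side coefficient (positive stoichiometric-matrix entry), and then invoke Proposition~\ref{prop:positive_feedback_loops_for_fully_open}. Your version is just slightly more explicit in citing Lemma~\ref{lem:influx_matrix_for_mass_action} and in noting why both hypotheses are needed.
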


\begin{proof}
	Let the species in the corollary be $X$ with the stoichiometric coefficients $a$ and $b$ in the reactant and the product sides of $\fr$ respectively. By the corollary's assumption $1\leq a<b$. Therefore signs of the corresponding entry of the influx matrix and the stoichiometric matrix are both positive and
	$\xymatrix @C=3pc @R=2.5pc{
		X \ar@/^0.2pc/[r]^{+} & \fr \ar@/^0.2pc/[l]^{+}
	}$
	is a positive feedback loop. By Proposition~\ref{prop:positive_feedback_loops_for_fully_open}, Theorem~\ref{thm:positive_feedback_loops} does not decide multistationarity status of the network.	
\end{proof}

In particular, if a fully open network contains a reversible reaction with a species appearing on both sides with different stoichiometric coefficients, one of the two sides gives us a positive feedback loop. Therefore Theorem~\ref{thm:positive_feedback_loops} does not decide the multistationarity of such a network.

\begin{proof}[Proof of Proposition \ref{prop:positive_feedback_loops_for_fully_open}]
	Since the network is fully open the stoichiometric dimension is the number of species, $n$. For every species the outflow reaction gives a negative feedback loop of length one.
	
	If no positive feedback loop exists then by Lemma~\ref{lem:sign_of_negative_feedback_loops}, the condition of Theorem~\ref{thm:positive_feedback_loops} holds. Conversely assume there exists a positive feedback loop. Denote the length of this positive feedback loop with $m$. If we select this positive feedback loop with $m-n$ negative feedback loops of length one for outflow reactions of the species not involved in the previous positive feedback loop, then the sum of the lengths is $n$ and the selected feedback loops all are disjoint. Therefore we have a nucleus whose sign is $\big((-1)^1\big)^{n-m}\big((-1)^0(+1)^m\big)=(-1)^{n-m}$ if $m$ is odd, and is $\big((-1)^1\big)^{n-m}\big((-1)^1(+1)^m\big)=(-1)^{n-m+1}$ if $m$ is even. If $m$ is odd, $n-m$ is odd or even if $n$ is even or odd respectively. That means $(-1)^{n-m}$ is always of opposite sign to $(-1)^n$. If $m$ is even, then $m-1$ is odd and with a same reasoning as in the previous sentence, the sign of $(-1)^{n-m+1}=(-1)^{n-(m-1)}$ is always of opposite sign to $(-1)^n$. Thus the condition of Theorem~\ref{thm:positive_feedback_loops} does not hold.
\end{proof}

We define a new operation on fully open networks that changes some stoichiometric coefficients.

\begin{definition}
	\label{def:negitive_friendly_operation}
	Let $\cN$ be a CRN (not necessarily fully open) with $\cS=\lbrace X_1,\dots,X_n\rbrace$ and $\cR=\lbrace \fr_1,\dots,\fr_r\rbrace$ where $\fr_i=\sum_{j=1}^na_j^{(i)}X_j\ce{->}\sum_{j=1}^nb_j^{(i)}X_j$. Define a new set of reactions $\cR'=\lbrace \fr_1',\dots,\fr_r'\rbrace$ with $\fr_i'=\sum_{j=1}^nc_j^{(i)}X_j\ce{->}\sum_{j=1}^nd_j^{(i)}X_j$ satisfying the following two conditions.
	\begin{itemize}
		\item For every $i$ and $j$, $a_j^{(i)}=0$ if and only if $c_j^{(i)}=0$, similarly $b_j^{(i)}=0$ if and only if $d_j^{(i)}=0$.
		\item For every $i$ and $j$, if $a_j^{(i)}\prec b_j^{(i)}$ where $\prec$ is one of the three relations $<$, $=$, $>$, then $c_j^{(i)}\prec d_j^{(i)}$.
	\end{itemize}
	Let $\cN'$ be the CRN defined by $\cR'$ as its set of reactions. We call this operation that converts $\cN$ to $\cN'$, a \emph{negative friendly operation} for the reason that will be revealed at Proposition~\ref{prop:changing_sc_without_loosing_NPFL} below.
\end{definition}

\begin{example}
	\label{ex:negative_friendly_operation}
	Consider a fully open network with four species $X_i$, $i=1,2,3,4$ and the following three non-flow reactions.
	\begin{equation}
		\label{eq:negative_friendly_operation_network_1}
		2X_1\ce{->[k_1]}X_1+X_2,\quad X_2+X_3\ce{->[k_2]}X_4,\quad 2X_4\ce{->[k_3]}0.
	\end{equation}
	The negative friendly operation tells us that in the first reaction we only can have $X_1$ in the reactant side and it must appears in the product side too, but their stoichiometric coefficients can be arbitrary positive integers as far as the coefficient on the reactant side is greater than in the product side, for example we can choose to increase the coefficient from 2 to 3 in the reactant side and keep the coefficient on the product side equal to 1. For the product complex we should also have $X_2$ with positive coefficient, but since there is no $X_2$ in the reactant side, we have no other constraint and it can be any positive integer. Let us increase this coefficient from 1 to 2. So the first reaction is converted to $3X_1\ce{->[k_1]}X_1+2X_2$. Similarly it is permitted under the negative friendly operation's rules to change the second reaction to $X_2+X_3\ce{->[k_2]}2X_4$ and the third reaction to $X_4\ce{->[k_3]}0$. However the last reaction will be a duplicate of the outflow reaction of $X_4$ and thus we disregard it. The result of the operation is a new fully open network with still four species but only two non-flow reactions shown below.
	\begin{equation}
		\label{eq:negative_friendly_operation_network_2}
		3X_1\ce{->[k_1]}X_1+2X_2,\quad X_2+X_3\ce{->[k_2]}2X_4.
	\end{equation}
\end{example}

\begin{proposition}
	\label{prop:changing_sc_without_loosing_NPFL}
	Let $\cN$ be a CRN (not necessarily fully open) equipped with mass action kinetics that does not have any positive feedback loop, and assume that $\cN'$ is a transformation of it via a negative friendly operation. Then $\cN'$ also does not have any positive feedback loop.
\end{proposition}

\begin{proof}
	Since $\cN'$ is a transformation of $\cN$ via a negative friendly operation, there exists a map $\phi\colon\cR\rightarrow\cR'$ where $\phi(\fr)$ is made by changing stoichiometric coefficients  of some (or none) of the species in $\fr$ without violating the two rules in Definition~\ref{def:negitive_friendly_operation}.
	Name the signed DSR-graphs of $\cN$ and $\cN'$ as $G$ and $G'$ respectively. If the change in the stoichiometric coefficients does not cause two reactions to become identical, then $G'=G$ and the proposition trivially holds. If two reactions become redundant, say $\phi(\fr_i)=\phi(\fr_j)$, for $i\neq j$, then $G'$ is the induced subgraph of $G$ obtained by removal of the vertex $v_{\fr_j}$. Thus the set of positive feedback loops of $\cN'$ is a subset of positive feedback loops of $\cN$. This completes the proof.
\end{proof}

\begin{corollary}
	\label{cor:infinite_non_multi}
	Let $\cN$ be a fully open network equipped with mass action kinetics and without a positive feedback loop, then any transformation of $\cN$ by a negative friendly operation is a non-multistationary network.
\end{corollary}

Note that this new operation keeps the set of species the same, but changes the sets of complexes and reactions. Furthermore, while each atom of multistationarity only provides finitely many choices of non-multistationary networks, a network without positive feedback loop provides us infinitely many choices of non-multistationary networks.

\subsection{Limitation of current approaches}
\label{sec:example_for_limitation_of_4_multi_methods}

To show the limitations of the current methods we study the following example.

\begin{example}\label{ex:example_for_limitation_of_current_methods}
	Consider a fully open network with 6 species and 6 non-flow reactions as follows.
	\begin{equation}\label{eq:network_for_limitation_of_current_methods}
		\begin{array}{ccc}
			X_1 + X_2 & \ce{->} & X_3\\
			X_2 + X_3 & \ce{<=>} & X_1 + X_4\\
			X_3 + X_5 & \ce{->} & 2 X_6\\
			X_4 & \ce{->} & X_5\\
			X_1 + 2 X_5 & \ce{->} & X_1
		\end{array}
	\end{equation}
	Giving this network to \texttt{CRNToolbox} as an input, the initial report suggests to run the higher deficiency algorithm. Note that the CR-graph contains the flow reactions as well, this graph has three linkage classes and the deficiency of the network is 4. After requesting the higher deficiency algorithm from \texttt{CRNToolbox}, which took more than a minute to finish the computation, the report states ``\textit{taken with mass action kinetics, the network might have the capacity for multiple steady states}''. I.e. The higher deficiency algorithm failed to determine the multistationarity of this network.  
	
	The report continues with a suggestion to run other algorithms implemented in  \texttt{CRNToolbox}: ``\textit{to determine whether the network might admit two distinct stoichiometrically compatible positive equilibria, you should also consult the mass action injectivity or concordance report if you haven't already done so. (To test for rate constants that give a degenerate steady state, consult the Zero Eigenvalue Report)}''. The mass action injectivity report is produced instantly but failed to make a decision on multistationarity. The report says ``\textit{taken with mass action kinetics, the network is not injective}''. 
	
	We then turn to the other suggestions in the higher deficiency report. When asking \texttt{CRNToolbox} for the zero eigenvalue report, the CPU use of the computer increases considerably and even after 10 minutes waiting the software does not give any output. The last suggestion of \texttt{CRNToolbox}, the concordance report, states that there is a kinetics such that the network can be multistationary, however the given kinetics is not mass-action. In summary, \texttt{CRNToolbox} and its deficiency results (the higher deficiency algorithm in this case) failed to determine multistationarity of Network~\eqref{eq:network_for_limitation_of_current_methods}.
	
	We attempt to use the algebraic approach via \texttt{Maple} package, \texttt{RootFinding:-Parametric}.  The amount of memory used quickly passes 25 GB within 7 minutes with the computation still unfinished.\footnote{Processor: Intel(R) Core(TM) i7-10850H CPU \@2.70GHz 2.71 GHz. Installed memory (RAM): 64.0 GB (63.6 GB usable). System type: 64-bit Operating System, x64-based processor.}
	
	Regarding the inheritance method: the network does not contain any atom of multistationarity that we know of and so we cannot apply any of the known inheritance results (for fully open networks with one reversible or irreversible non-flow reaction, or fully open networks with two reversible and irreversible non-flow reactions where the sum of the stoichiometric coefficients of each complex is at most two, as introduced in \cite{Joshi-2013,Joshi-Shiu-2013}).
	Hence this method also fails to detect multistationarity of this network.
	
	Finally the last approach, the use of positive feedback loops for excluding multistationarity also fails. Note that by considering the reversible reaction $X_2 + X_3 \ce{<=>} X_1 + X_4$ and naming the forward reaction as $\fr_1$ and the backward reaction as $\fr_2$, we get the following positive feedback loop.
	\[\xymatrix @C=3pc @R=0.5pc{
		& \fr_1 \ar@/^0.5pc/[dr]^+ & \\
		X_2 \ar@/^0.5pc/[ru]^+ &  & X_1 \ar@/^0.5pc/[ld]^+\\
		& \fr_2 \ar@/^0.5pc/[ul]^+ &
	}\]
	Thus by Proposition~\ref{prop:positive_feedback_loops_for_fully_open}, Theorem~\ref{thm:positive_feedback_loops} is inconclusive for Network~\eqref{eq:network_for_limitation_of_current_methods}.
\end{example}

This example motivates us to look for an alternative approach with ML.

\section{Limitations of simple ML approaches}
\label{sec:ML}

An observation, true for any network, is that when we fix the edge set of a CR-graph of a network and only let the stoichiometric coefficients vary in the complexes appearing in the vertex set, then multistationarity of the network is dependent on these stoichiometric coefficients. Therefore a natural question to ask is whether we can train a Machine Learning (ML) model to identify whether a network is multistationary or non-multistationary based on those coefficients.

In some cases there are existing results  making explicit the relationship.  For example, for the smallest fully open networks, namely fully open networks with only one irreversible or reversible non-flow reaction, Joshi has completely classified multistationarity of these networks in \cite{Joshi-2013} using deficiency results (Section~\ref{sec:multi_and_deficiency}).

\begin{thm}\label{thm:one_non_flow_Joshi_thm}
	\emph{(\cite[Theorem 4.1]{Joshi-2013}).} Let $\cN$ be a fully open network with $\cS=\lbrace X_1,\dots,X_n\rbrace$ and only one non-flow reaction as follows.
	\[a_1X_1+\dots+a_nX_n\ce{->}b_1X_1+\dots+b_nX_n.\]
	The network $\cN$ is multistationary if and only if
	\[\sum_{\substack{i=1\\a_i<b_i}}^na_i>1.\]
	If the non-flow reaction is reversible then the above condition is replaced by
	\[\sum_{\substack{i=1\\a_i<b_i}}^na_i>1\text{ or }\sum_{\substack{i=1\\b_i<a_i}}^nb_i>1.\]
\end{thm}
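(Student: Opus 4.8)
The plan is to eliminate all but one variable and reduce multistationarity to counting the roots of a scalar equation. Write the non-flow reaction as $a\ce{->}b$ with reactant vector $(a_1,\dots,a_n)$ and product vector $(b_1,\dots,b_n)$; let $\kappa_i,\mu_i$ be the inflow and outflow rate constants of $X_i$ and $k$ the rate constant of $a\ce{->}b$. The steady state equations are $\kappa_i-\mu_i x_i+(b_i-a_i)kx^a=0$, so putting $t=kx^a$ gives $x_i=\mu_i^{-1}(\kappa_i+(b_i-a_i)t)$; since $a\neq b$ the map $t\mapsto x(t)$ is injective, so distinct steady states correspond to distinct roots of the scalar self-consistency equation $g(t):=k\prod_i\mu_i^{-a_i}(\kappa_i+(b_i-a_i)t)^{a_i}=t$. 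A short argument (using positivity of the $\kappa_i,\mu_i$) shows every steady state is in fact strictly positive and every admissible root lies in the open interval $(0,T)$ on which all factors of $g$ stay positive, where $T=\min\{\kappa_i/(a_i-b_i):a_i>b_i\}$ (or $T=\infty$ if no such $i$). Thus $\cN$ is multistationary iff for some positive parameters $g(t)=t$ has two roots in $(0,T)$.

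For necessity, suppose $A_+:=\sum_{i:a_i<b_i}a_i\le1$. The plan is to study $\phi(t)=\log g(t)-\log t$ on $(0,T)$: one computes $\phi'(t)=\sum_i\frac{a_i(b_i-a_i)}{\kappa_i+(b_i-a_i)t}-\frac1t$, and since $\frac{a_i(b_i-a_i)}{\kappa_i+(b_i-a_i)t}<\frac{a_i}{t}$ whenever $a_i<b_i$, the terms with $a_i>b_i$ contribute negatively, and $A_+\le1$, one gets $\phi'<0$ throughout; hence $g(t)=t$ has at most one root and $\cN$ is not multistationary. For sufficiency, suppose $A_+>1$. I would first handle the case where no species has $a_i>b_i$: taking $\kappa_i=\mu_i=1$, we have $\phi(0^+)=+\infty$ and, because $A_+>1$, also $\phi(t)\to+\infty$ as $t\to\infty$, so choosing $k$ small enough that $\phi(1)<0$ produces two roots. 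For the general case I would pass to the embedded subnetwork obtained by deleting every species with $a_i>b_i$ (Definition~\ref{def:embedded_network}): it is again fully open, has the same $A_+$, and has no species with $a_i>b_i$, so it is multistationary by the previous case, whence $\cN$ is multistationary by Theorem~\ref{thm:inheritence}.

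For a reversible reaction $a\ce{<=>}b$ sufficiency is then immediate: keeping only $a\ce{->}b$ (respectively only $b\ce{->}a$) together with all flow reactions yields a fully open embedded subnetwork, which by the irreversible case is multistationary precisely when $\sum_{i:a_i<b_i}a_i>1$ (respectively $\sum_{i:b_i<a_i}b_i>1$), and Theorem~\ref{thm:inheritence} transfers multistationarity to $\cN$. For necessity, the elimination now gives $x_i=\mu_i^{-1}(\kappa_i+(b_i-a_i)w)$ with $w=k_fx^a-k_rx^b$ ranging over an open interval $(w_-,w_+)$ that may straddle $0$, and the self-consistency equation becomes $G(w)=w$ with $G=k_fP-k_rQ$, where $P(w)=\prod_i u_i^{a_i}$, $Q(w)=\prod_i u_i^{b_i}$, $u_i=\mu_i^{-1}(\kappa_i+(b_i-a_i)w)$. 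The plan is to split $(w_-,w_+)$ at $0$: on $(0,w_+)$ I would show $G/P$ is strictly decreasing, using $\sum_{i:a_i<b_i}a_i\le1$ exactly as in the estimate above; on $(w_-,0)$ I would instead show $G/Q$ is strictly decreasing, using $\sum_{i:b_i<a_i}b_i\le1$; after checking the edge case $w=0$, these two monotonicities together force $G(w)=w$ to have at most one root, so $\cN$ is not multistationary.

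The step I expect to be the main obstacle is the necessity half of the reversible statement. Unlike the irreversible case, $G$ is a difference of two monotone pieces on an interval straddling $0$, so taking logarithms no longer works; the crucial idea is to normalise $G$ differently on the two sides of $0$ — dividing by $P$ where the monomial $x^a$ dominates and by $Q$ where $x^b$ dominates — so that each of the two hypotheses $\sum_{a_i<b_i}a_i\le1$ and $\sum_{b_i<a_i}b_i\le1$ is invoked exactly where it is needed. The secondary point requiring care is the domain bookkeeping: verifying that admissible roots avoid the endpoints of the interval and that every steady state is strictly positive, so that root-counting on the open interval really does count all steady states.
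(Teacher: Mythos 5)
This theorem is stated in the paper only as a quotation of \cite[Theorem 4.1]{Joshi-2013}; the paper contains no proof of its own, so I assess your argument on its own terms. Your reduction to a scalar fixed-point equation $g(t)=t$ with $t=kx^a$, the positivity and domain bookkeeping, the monotonicity of $\log g(t)-\log t$ for the irreversible ``only if'' direction, and both ``if'' directions (explicit construction of two roots when no species has $a_i>b_i$, then passing to embedded subnetworks and invoking Theorem~\ref{thm:inheritence}) are all sound.

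The genuine gap is exactly where you predicted it: the necessity half of the reversible case. There are two problems. First, the quantities you propose to control are not the ones that use the hypotheses: $G/P=k_f-k_rQ/P$ and $G/Q=k_fP/Q-k_r$ are strictly decreasing \emph{unconditionally}, since $\log(Q/P)$ has derivative $\sum_i(b_i-a_i)^2/(\kappa_i+(b_i-a_i)w)>0$. The conditions $\sum_{a_i<b_i}a_i\le 1$ and $\sum_{b_i<a_i}b_i\le 1$ are needed to show that the comparison functions $w/P$ on $(0,w_+)$ and $w/Q$ on $(w_-,0)$ are strictly \emph{increasing} (this is where your estimate $\tfrac{a_i(b_i-a_i)}{\kappa_i+(b_i-a_i)w}<\tfrac{a_i}{w}$ and its mirror for the $b_i$ actually enter); only the pairing ``$G/P$ decreasing, $w/P$ increasing'' yields at most one fixed point on $(0,w_+)$, and likewise on $(w_-,0)$. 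Second, and more seriously, ``at most one root on each side of $0$, plus the edge case $w=0$'' still permits up to three fixed points, so the two monotonicities do not by themselves force uniqueness. You need a gluing step across $w=0$: since $G/P$ is strictly decreasing, a fixed point $w^\star>0$ gives $G(0)/P(0)>w^\star/P(w^\star)>0$, hence $G(0)>0$; symmetrically, a fixed point $w^{\star\star}<0$ forces $G(0)<0$ via $G/Q$; and $w=0$ is a fixed point only if $G(0)=0$. These three sign conditions are mutually exclusive, so only one of the three regimes can contain a fixed point, which together with the per-interval uniqueness closes the argument. Without this step the reversible criterion's necessity is not established.
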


Let us consider whether ML could identify such results. Consider fully open networks as in Theorem~\ref{thm:one_non_flow_Joshi_thm}, with only one species and one irreversible non-flow reaction. So the only varying features of the networks are the stoichiometric coefficients of this species on the two sides of the non-flow reaction, $aX \ce{->} bX$.  Then the input space for the ML problem can be the two-dimensional space, $\lbrace (a,b)\in\N^2\mid a\neq b\rbrace$. By Theorem~\ref{thm:one_non_flow_Joshi_thm} the set of multistationary networks will correspond to the set of points in the input space with $a>\frac{3}{2}$ and $b>a$. 
This suggests that we may use a relatively simple ML technique such as Support Vector Machines (SVMs) \cite{Cristianini-Shawe_Taylor-2000} to fit a dividing hypersurface between the classes.  

However, one should note that models such as SVM and Random Forest (RF) \cite[Chapter 5]{Zhang-Ma-2012} require fixed length input vectors, whereas the number of stoichiometric coefficients of networks vary based on the number of reactions and species. So although we can get nice results for small sub-cases this way, it does not generalise to study broader families of CRNs. In the next section we develop new methods to handle variable length input data for CRNs.

Interested readers can find examples of training SVM and RF for fixed size small fully open networks within the code release supporting this paper: see the Data Access Statement at the end of the paper for the link.

\section{ML for CRNs via a new graph representation}
\label{sec:GAT}

We opt for a graph embedding of our chemical reaction networks as input to ML, since this seems best placed to capture the detail of the interactions involved. There are a variety of ML approaches that accept graphs as input, usually called graph learning algorithms. Probably the best known is the \emph{Graph Neural Network} (GNN) which consists of an iterative process to propagate node states until equilibrium, after which a neural network is trained to produce an output for each node.  An overview of the family of GNN methods may be found in \cite{Liu-Zhou-2020}. 

However, before applying any such tool we need to first decide precisely how to present a CRN as a graph input. For this purpose we introduce a new graph representation for a fully open network, designed to allow for decisions on multistationarity based on the stoichiometric coefficients and direction of the reactions.

\subsection{New graph representation of CRNs}

Since the species flow reactions are included in all fully open CRNs, they do not play a role in determining the multistationarity ones from the non-multistationarity.  Thus we will only encode the rest of the reactions for our ML model.

\begin{definition}\label{def:SRSC_graph}
	Let $\cN$ be a fully open network with $\cS=\lbrace X_1,\dots,X_n\rbrace$ and assume there are $r$ reactions other than the species flows where the $i$th such reaction, denoted by $\fr_i$, is written as
	\[\sum_{j=1}^na_j^{(i)}X_j\ce{->}\sum_{j=1}^nb_j^{(i)}X_j.\]
	We construct a new weighted labelled digraph representation with the vertex set $V$ and edge set $E$ following the steps below.
	\begin{enumerate}
		\item For any reaction $\fr_i$ construct a vertex $v_{\fr_i}\in V$ to represent this reaction.
		\item For every species in the reactant side of $\fr_i$, say $X_j$, construct a vertex $\xheightsub{v}{X_j\to\fr_i}$. That means if $a_j^{(i)}\neq 0$ then $\xheightsub{v}{X_j\to\fr_i}\in V$. Construct also a directed edge $\xheightsub{e}{X_j\to\fr_i}=(\xheightsub{v}{X_j\to\fr_i},v_{\fr_i})\in E$.
		\item Similarly for every species in the product side of $\fr_i$, say $X_j$, we associate a vertex $\xheightsub{v}{\fr_i\to X_j}$. That means if $b_j^{(i)}\neq 0$ then $\xheightsub{v}{\fr_i\to X_j}\in V$.  Construct also directed edge $\xheightsub{e}{\fr_i\to X_j}=(\xheightsub{v}{\fr_i},\xheightsub{v}{\fr_i\to X_j})\in E$.
		\item Connect all vertices of the form $\xheightsub{v}{X_j\to\fr_i}$ or $\xheightsub{v}{\fr_i\to X_j}$ via reversible edges with the label $X_j$.
		\item Assign every vertex of the form $v_{\fr_i}$ the weight 0.
		\item Assign all vertices of the form $\xheightsub{v}{X_j\to\fr_i}$ the weight $a_j^{(i)}$, and all vertices of the form $\xheightsub{v}{\fr_i\to X_j}$ the weight $b_j^{(i)}$.
	\end{enumerate}
	Since a species with 0 as its stoichiometric coefficient in a complex does not contribute to a vertex in this graph, the only vertices with 0 as their weight are the reaction vertices. All other vertices contain one of the non-zero stoichiometric coefficients that appear in the CR-graph of the network. The labels on the reversible reactions make it clear which of the stoichiometric coefficients in this new graph are coming from the same species and makes it possible to recreate the complexes. 
	
	Note that a permutation on the species names or a renaming of them does not change the multistationarity of the network. The irreversible unlabelled edges show which side of the reaction the species  appeared, with the given stoichiometric coefficient. Therefore the CR-graph can uniquely, up to a renaming of the species, be rewritten from this new graph. We call this new graph representation of a fully open network the \emph{\textbf{s}pecies-\textbf{r}eaction graph with emphasis on \textbf{s}toichiometric \textbf{c}oefficients} abbreviated as the \emph{SRSC-graph}.
\end{definition}

\begin{example}\label{ex:SRSC_graph}
	Let $\cN_1$ be a fully open network with two species $A$ and $B$ and one non-flow reversible reaction, \begin{equation}\label{eq:network_for_SRSC_example_1}
		A+2B\ce{<=>}3A+B.
	\end{equation}
	Let $\cN_2$ be a fully open network with three species $A$, $B$ and $C$ and three non-flow irreversible reactions:
	\begin{equation}\label{eq:network_for_SRSC_example_2}
		A+B\ce{->}2A,\;A+C\ce{->}B,\;2C\ce{->}B+C.
	\end{equation}
	The SRSC-graph of these two networks are presented in Figure~\ref{fig:SRSC_graph}.
	
	\begin{figure}[ht]
		\centering
		\begin{tabular}{cc}
			\includegraphics[width=6cm]{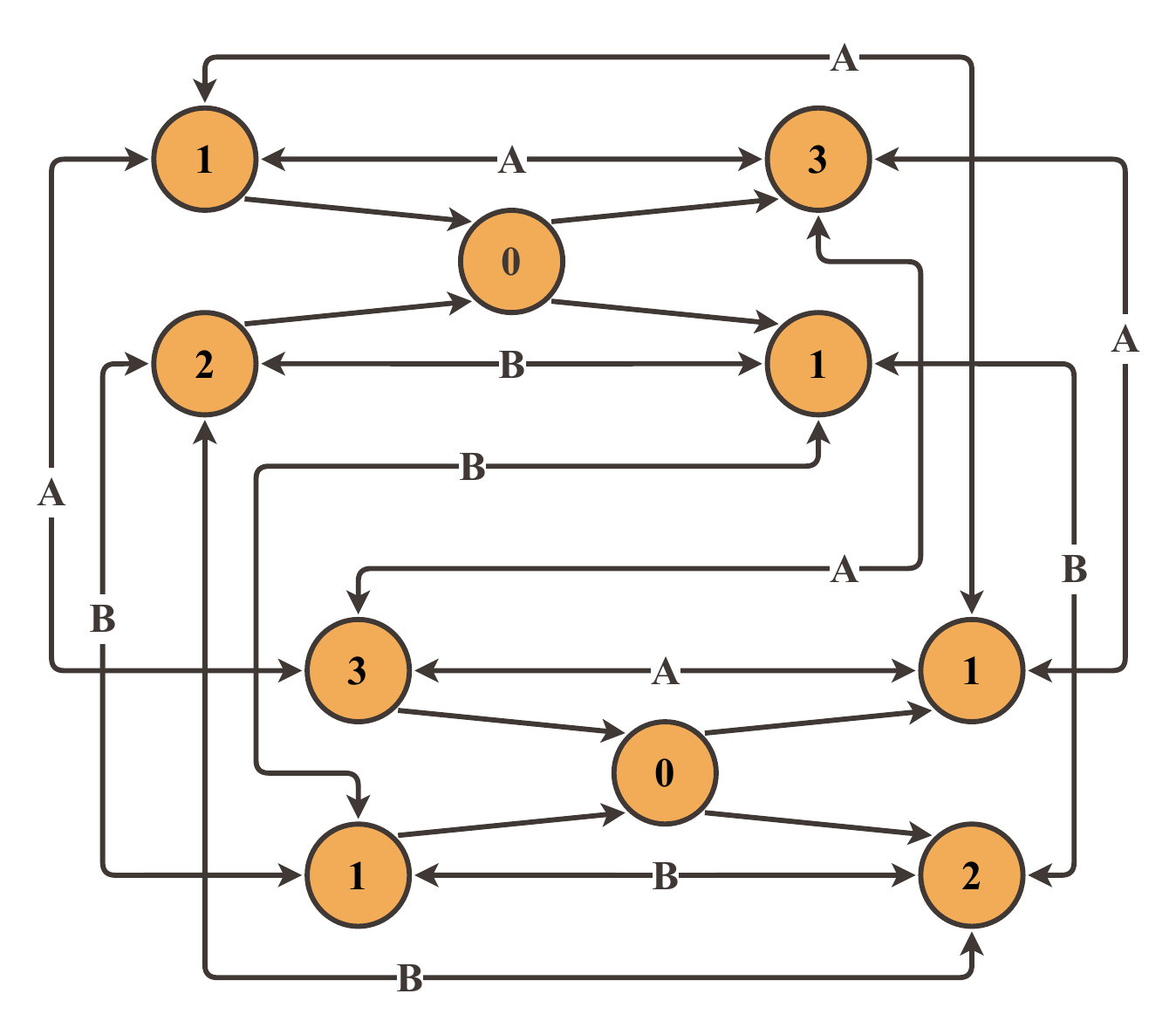} &
			\includegraphics[width=6cm]{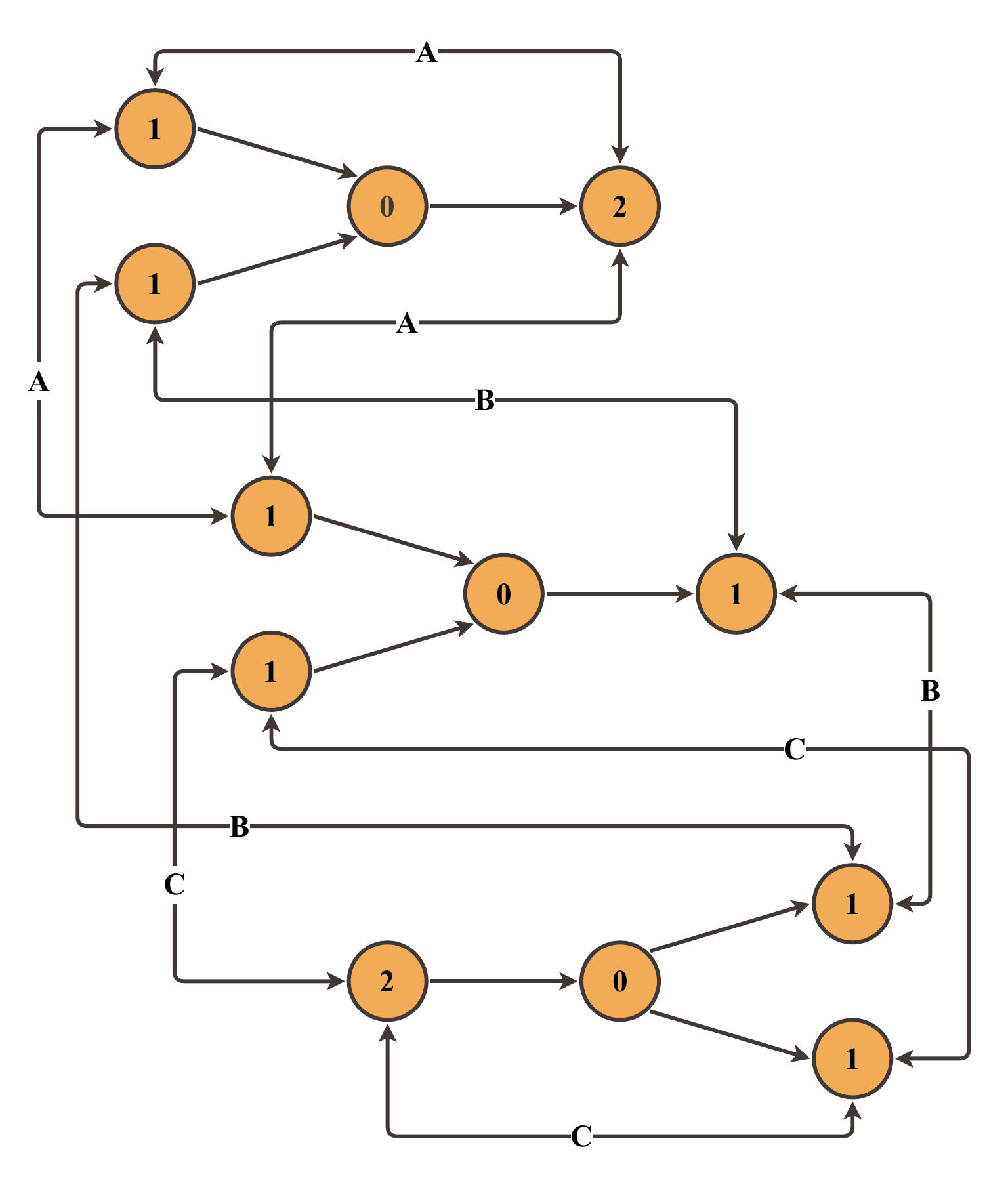}\\
			(a) & (b)\\
		\end{tabular}
		\caption{The SRSC-graph representation of  (a) Network \eqref{eq:network_for_SRSC_example_1} and (b) Network \eqref{eq:network_for_SRSC_example_2}.}
		\label{fig:SRSC_graph}
	\end{figure}
	
\end{example}

To see the reasoning behind Definition~\ref{def:SRSC_graph} let us briefly go through what happens in a graph learning algorithm. A graph learning algorithm works as follows. The input is a set of nodes and edges. Each node has a scalar or a vector of numbers as its feature; in the case of vectors, all must be of the same length. The algorithm updates the value of each node using an aggregation function, such as average, sum, or a learnable function, that combines its own previous value with the values from its neighbours. In directed graphs, only neighbours with incoming edges to the node are considered. This process is repeated for several steps (layers), and the algorithm can learn to assign different weights to different neighbours during aggregation.

For graph-level classification (not node-level), the final node representations are combined using a pooling operation to produce a single graph representation: examples include mean, sum, or max pooling. This graph-level vector is then passed to a classifier (for example a neural network) that works with fixed-length input to make the prediction.

The decisive data in our problem are the stoichiometric coefficients, and therefore we created nodes containing these scalar values. We decided to allow aggregation in the direction of reactants to products, and also freely among different instances of a species in different complexes. Note that without the later permission, the graph may end up with low connectivity, leading the graph learning algorithm to perform poorly. We also put 0 as the value on the reaction vertices since they are not encoding stoichiometric coefficients and they are only playing the role of a bridge between reactants and products and thus they should not have a numerical contribution.

\subsection{Dataset creation}
\label{subsec:dataset}

Our new graph representation allows us to apply graph learning algorithms and, in contrast to SVM and RF, we do not need to train a new model for each choice of number of species and number of non-flow reactions. However, graph learning networks need large datasets for training and testing. In addition to that, our goal is not to only study one non-flow reaction since Theorem~\ref{thm:one_non_flow_Joshi_thm} has already provided an exact and easy answer to that case. We want to have a tool to predict multistationarity of a general fully open network and so we need to create a dataset that has representation from a reasonably diverse set of shapes of fully open networks. To create this labelled dataset we used not only Theorem~\ref{thm:one_non_flow_Joshi_thm}, but also the rest of tools reviewed in Section~\ref{sec:multistationarity_methods}. 

We included all atoms of multistationarity, smallest embedded subnetworks that are multistationary, of bimolecular\footnote{meaning the sum of stoichiometric coefficients in every complex is at most two.} two non-flow reactions that are classified in \cite[Figure 3]{Joshi-Shiu-2013}. There are 35 of these. Then we created some extended networks from these atoms which are again multistationary due to Theorem~\ref{thm:inheritance}.

\texttt{CRNToolbox} was also used to generate some examples and label them. Unfortunately generation of networks in \texttt{CRNToolbox} can not be easily automated: this tool is designed for the user to enter every example manually. The software does not have an API interface to allow for calls from code.   We also observed that for networks with large complexes or many reactions the software does not work well.

As seen in Section~\ref{sec:example_for_limitation_of_4_multi_methods} the algebraic approach also takes too much time and memory and thus was deemed not practical for dataset generation.

Using Theorem~\ref{thm:inheritance} we were able to create many multistationary networks from any other multistationary case that we could find. However, creating non-multistationary examples is harder and the dataset up to this step is very unbalanced. That led us to the positive feedback loop approach and Theorem~\ref{thm:positive_feedback_loops}.

To create a sufficient number of non-multistationary networks to make our dataset balanced we first identified 45 fully open networks with no positive feedback loops as listed in Appendix~\ref{sec:appendix_non_multi_classes_using_feedback_loop_results}. Then, by changing the stoichiometric coefficients of these networks following Corollary~\ref{cor:infinite_non_multi}, we generated new non-multistationary examples to add to the dataset.

The final dataset consists of around 104,000 CRNs, split evenly into two classes of multistationary and non-multistationary. These are created using three tools: 5,621 multistationary and 6,898 non-multistationary networks are generated using Theorem~\ref{thm:one_non_flow_Joshi_thm}, 46,499 multistationary networks are created by Theorem~\ref{thm:inheritance} using the atoms of multistationary given at \cite[Figure 3]{Joshi-Shiu-2013}, and finally 45,410 non-multistationary networks are created using Proposition~\ref{prop:changing_sc_without_loosing_NPFL} and the 45 base networks with no positive feedback loops in Appendix~\ref{sec:appendix_non_multi_classes_using_feedback_loop_results}. The number of species, number of reactions, and the individual stoichiometric coefficients are set to vary in the ranges $2-5$, $1-5$, and $0-5$ respectively, inclusive of the end points of the ranges. Table~\ref{tab:dataset_stats} shows the distribution of the networks in this dataset with respect to the number of species and reactions.

\begin{table}[h!]
	\centering
	\caption{The split of the dataset with respect to the number of species and reactions involved in its networks. \label{tab:dataset_stats} }
	\begin{tabular}{|c|c|c|c|c|c||c|}
		\hline
		no. species $\backslash$ no. reactions & 1 & 2 & 3 & 4 & 5 & Sum \\
		\hline
		2 & 1053 & 420 & 0 & 0 & 0 & 1473 \\
		\hline
		3 & 3503 & 3413 & 7641 & 4500 & 4500 & 23557 \\
		\hline
		4 & 3966 & 3790 & 11899 & 14287 & 6600 & 40542 \\
		\hline
		5 & 3997 & 5236 & 10935 & 12088 & 6600 & 38856 \\
		\hline
		\hline
		Sum & 12519 & 12859 & 30475 & 30875 & 17700 & 104428 \\
		\hline
	\end{tabular}
\end{table}

For validation purpose, we created a separate 32 additional networks manually, keeping the numbers of reactions and species in the range $2-6$.  None of these networks belong to the dataset above; further, we can prove that none of the non-multistationary ones could possibly be written as the result of a negative friendly operation from the 45 networks in Appendix~\ref{sec:appendix_non_multi_classes_using_feedback_loop_results}. These networks could all be labelled using \texttt{CRNToolbox} and they are listed in Appendix~\ref{sec:appendix_validating ntworks}.  Since these represent fundamentally different networks to those in the training set they will be used for validation $-$ to evaluate the level of generalisability of the ML model.

\subsection{ML experiment methodology}

Most GNNs do not consider directions of the graph edges, instead treating every edge similarly, which does not fit our application in which the directions are meaningful.  This led us to work with \emph{graph attention networks} (GATs\footnote{The acronym GAT is chosen to distinguish this from generative adversarial networks: another tool in the ML family which is usually referred to with the acronym GAN.}) as first presented in \cite{GAT-paper-2018}, which crucially for us allow for directed edges.  

The main distinction of GATs over GNNs is the use of a self-attention mechanism (inspired by transformers \cite{selfattention}).  In ML, an attention mechanism is a technique that allows models to focus on specific parts of the input data when producing outputs, allowing for variable sized inputs, with a self-attention mechanism the use of these to compute a representation of single sequence.  The architecture of GATs allows training to efficiently specify weights to different nodes in a neighbourhood without knowing the graph structure upfront.  Thus GATs usually perform better than GNNs in cases where the graph contains nodes or edges of varying importance to a prediction.  We hypothesise that this will be the case for our problem: the architecture may allow for learning to focus on parts of the network that promote or block multistationarity. The disadvantage of GATs compared to GNNs is a higher computational cost, however, this extra cost is not significant for the size of experiments presented here.

We use the GAT implementation in DGL \cite{Wang-et-al-2019-DGL-library} running on  PyTorch \cite{Paszke-et-al-2019-Pytorch}. We use the cross-entropy loss function\footnote{The cross-entropy loss function measures the difference between a model's predicted probability distribution and the true probability distribution in classification tasks. It is calculated as the negative sum of the true probabilities multiplied by the logarithm of the predicted probabilities for each class.} with the input, hidden, output, projection dimensions, batch size, and number of epochs set to 1, 4, 3, 3, 128, 100 respectively.

\subsection{Results}

Figure \ref{fig:GAT_loss} shows the training loss (the aggregated difference between a model's predicted output and the actual target values) and accuracy of classifications of the GAT model as it progressed through the training process.  In both cases we observe a fairly swift and smooth convergence.

\begin{figure}[ht]
	\begin{center}
		\includegraphics[width=5.5cm]{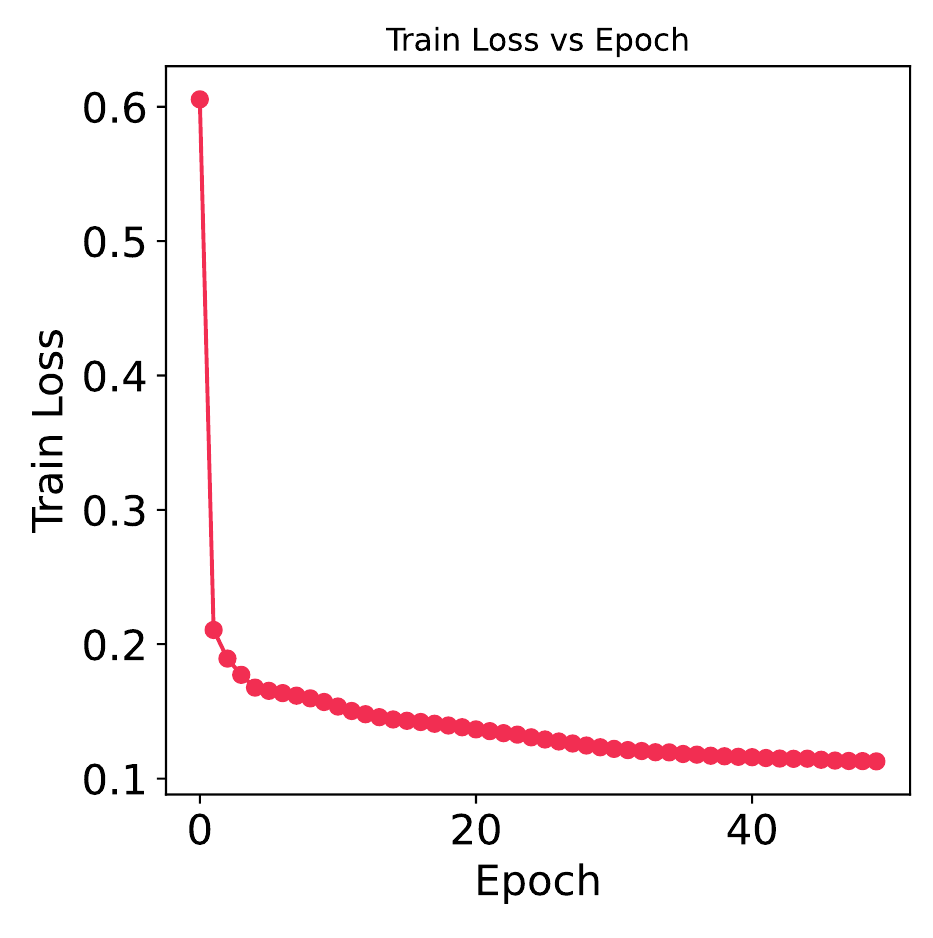}
		\includegraphics[width=5.5cm]{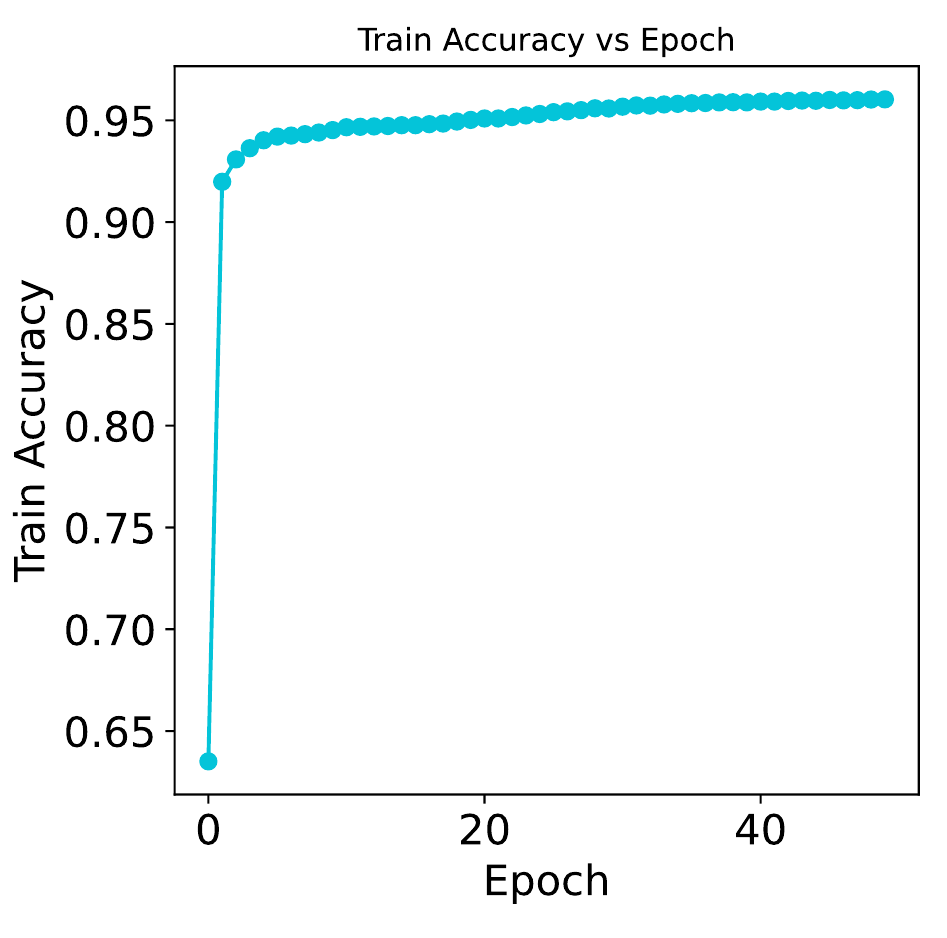}
	\end{center}
	\caption{The training loss and accuracy of the GAT model throughout the training process.}
	\label{fig:GAT_loss}
\end{figure}

The trained GAT model achieves 96\% accuracy on both the training and testing datasets (separate subsets of the original dataset described).  We can conclude that the new graph representation allows for successful machine learning, in particular without the limitations of simpler models described in Section \ref{sec:ML}.

The model outputs the binary classification, but prior to this, the model can project the input CRN to three dimensions (as a subsequent step before projecting further to that final decision).  The advantage of this is that the 3-dimensional data can be used to give a visualisation of its classification.  An example of this is within Figure \ref{fig:GAT_plots}, showing that under this projection, the labelled instances are very well separated.

\begin{figure}[ht]
	\centering 	
	\includegraphics[width=9.00cm]{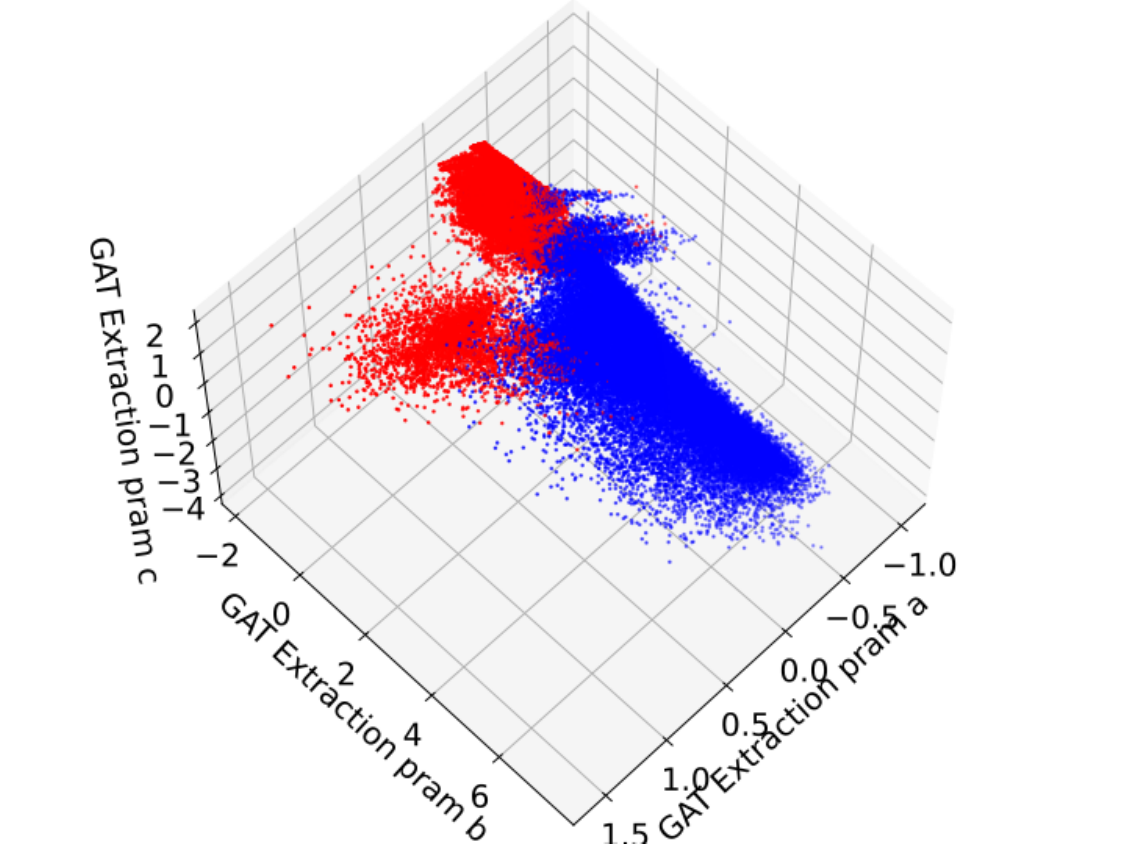}
	\caption{Visualization of the class separation provided by the GAT model on the training data.}
	\label{fig:GAT_plots}
\end{figure}

\subsection{Validation on independent data}

A key question of any ML experiment is whether the model can generalise away from the training data.  Although we tested and saw good predictive performance on data not used in training, that testing data came from the same original source dataset and thus will have many similarities to the training data.  It is desirable to also validate performance on independently sourced data.

To explore this we use the separate validation set of 32 examples described at the end of Section~\ref{subsec:dataset} and itemised in Appendix~\ref{sec:appendix_validating ntworks}. These examples were produced completely independently of the main dataset. Table~\ref{tab:confusion_matrix}(a) shows the confusion matrix for the performance of the GAT on this data: the instances in the shaded cells are correctly predicted and the others incorrect. The overall accuracy is 72\%, and this is observed both in predictions of multistationarity and non-multistationarity. While there is still much room for improvement, the model is demonstrating generalisation, i.e. performance on independently sourced data.

\begin{table}[h!]
	\centering
	\caption{\label{tab:confusion_matrix} Confusion Matrix of performance of the trained GAT model on (a) the 32 networks shown in Appendix~\ref{sec:appendix_validating ntworks}; and on (b) the 62 fully open 2-CRN selected from dataset of \cite{Banaji-2017} as explained in the text (and available in the code release supporting this paper).}
	\begin{tabular}[h!]{cc}
		(a)
		&
		(b)
		\vspace{0.25cm}
		\\
		\begin{minipage}{0.5\linewidth}
			\resizebox{0.8\textwidth}{!}{
				\begin{tabular}{|c|c|c|}
					\hline
					\multirow{2}{*}{\textbf{Actual}} & \multicolumn{2}{c|}{\textbf{Predicted}} \\
					\cline{2-3}
					& \textbf{Multi} & \textbf{Non-Multi} \\
					\hline
					\textbf{Multi} & \cellcolor{YellowGreen} 8 & 3 \\
					\hline
					\textbf{Non-Multi} & 6 & \cellcolor{YellowGreen} 15 \\
					\hline
				\end{tabular}
			}
		\end{minipage}
		&
		\begin{minipage}{0.5\linewidth}
			\resizebox{0.8\textwidth}{!}{
				\begin{tabular}{|c|c|c|}
					\hline
					\multirow{2.6}{*}{\textbf{Actual}} & \multicolumn{2}{c|}{\textbf{Predicted}} \\
					\cline{2-3}
					& \textbf{Multi} & \textbf{Non-Multi} \\
					\hline
					\textbf{Multi} & \cellcolor{YellowGreen} 2 & 3 \\
					\hline
					\textbf{Non-Multi} & 22 & \cellcolor{YellowGreen} 35 \\
					\hline
				\end{tabular}
			}
		\end{minipage}
	\end{tabular}
\end{table}

We encourage readers to perform further validation on their models of interest.  At the code of the paper we provided a Jupyter notebook called \texttt{MLCRN\_Presentation\_Code.ipynb} where the user can feed a new fully open network and ask the prediction of the trained model for its multistationarity. We also provided the possibility to feed a collection of fully open networks together with their true multistationarity labels in order to get the confusion matrix of the performance of the model on the user's new data. 

\subsection{Further validation with NAUTY}

One further source of validation we considered is via \cite{Banaji-2017}. The author of \cite{Banaji-2017} used the graph isomorphism software NAUTY \cite{MP14} to enumerate all non-isomorphic networks with complexes taken from the set of small shapes $\{ 0, X, 2X, X+Y\}$, and with number of species and reactions constrained to at most six.  They provide an open dataset of such networks, which are called \emph{2-CRNs}. The dataset has $218,488,063$ fully open 2-CRNs at the time of writing. Although the size is suitable for training a graph learning algorithm, the data does not include labels indicating the multistationarity of the networks. Therefore instead of using it for training a model, we selected some of the networks from this dataset to manually label and use for another validation test of our trained model. 

The data from \cite{Banaji-2017} is stored in different classes depending on the number of species and reactions. From each class we selected the first, the last and the middle network: so we use at most three from each class\footnote{\emph{at most}, because not every class of networks had three to chose from: for example, there was only one network in class with 1 species and 4 reactions.}. We also avoided classes where the file size was more than 100 MB. This selection process identified 67 networks. These networks and the code that selected them systematically can be found in the code release that supports the present paper. 

From these 67 networks, we find 61 can be checked for multistationarity by CRNToolbox. For the other 6 networks, the tests using CRNToolbox either terminate with an error message or declare an inconclusive result. From these 6 networks, we were able to conclude multistationarity of one of them since it contained an atom of multistationarity (see Section~\ref{sec:multi_and_inheritance}) and we used the implementation of the algebraic method in Maple (see Section~\ref{sec:multi_and_algebra}) to conclude non-multistationarity for another 3. There remains two networks of the 67 that none of the methods reviewed in Section~\ref{sec:multistationarity_methods} could tackle. Therefore we have 65 labelled networks. However, because our model works only for an input network that has a connected SRSC graph, three of these networks were found not suitable. In the end for this validation we have 62 acceptable fully open 2-CRN labelled as multistationary and another 5 labelled as non-multistationary. 

The confusion matrix of the performance of our trained model on this collection is given in Table~\ref{tab:confusion_matrix}(b). For these networks, the model does not perform as well as with the original testing data or the previous validation dataset.  The overall accuracy is 60\%, with 61\% accuracy on the non-multistationary class and 40\% on the multistationary class. A likely reason for this weaker performance is that while the number of species and reactions in the training data is similar to this 2-CRN dataset (in the range 1 to 6), that data also had freedom for the stoichiometric coefficients to be more than two, and to have complexes may have more than two species. Therefore the model is not particularly well specialised on finding the patterns specific to 2-CRNs.

\section{Conclusions and future work}
\label{sec:conclusions}

Our new theoretical results on multistationarity have allowed us to generate a new large scale dataset and demonstrate the feasibility of ML to predict properties of CRNs.  This can offer a powerful alternative or complementary approach to the existing methods to uncover this property.  The ML approach pays a high computational cost once (during ML training) but is cheap to make predictions with thereafter, compared to existing methods with a substantial cost per prediction.  Of course, these ML predictions are not always correct, and so researchers may choose to pay the high cost to verify the result: such a cost could be chosen selectively though, e.g. just for the model selected to study in detail.

We suggest that graph learning is the natural ML methodology for answering such questions about CRNs and have developed a graph representation of CRNs suitable for the purpose.  This approach removes the limitation of having fixed length input and arguably better represents the underlying networks than other embeddings for ML.  We note that the representation and methodology of Section \ref{sec:GAT} could be applied for other classifications in CRN with relative ease.

Future work could of course experiment with other representations and learning algorithms.  However, we suspect that future improvements to generalisability will have to come from a more diverse dataset.  We have contributed the first large scale labelled dataset in this field (available online at the URL at the end).  However, this does not yet sufficiently represent the space of CRNs.  Thus we need need new theoretical results to guide the data generation.

We finish by speculating on another avenue of future work: to investigate the possibility that the ML models trained for such tasks might inform new mathematical results.  The use of ML for mathematical discovery has been investigated for example in \cite{DVBBZTTBBJLWHK21, He2022}.  A route to this may come from applying the burgeoning literature on Explainable / Interpretable AI methods \cite{BDDBTBGGMBCH20}.  For example, recently an XAI analysis of ML predictions to optimise a computer algebra algorithm led to the creation of new simple heuristics that are not data-dependent and can be easily interpreted and implemented by humans \cite{PdREC24}.  We propose that such analysis might lead mathematicians to new CRN results on multistationarity.

\section*{Data Access Statement}

The code and data described in this paper is openly available from this URL: \url{https://doi.org/10.5281/zenodo.17054846}

\section*{Acknowledgements}

The first author acknowledges funding from Coventry University for a summer research internship.  The second and third authors  acknowledge the support of UKRI EPSRC Grant EP/T015748/1, ``\textit{Pushing Back the Doubly-Exponential Wall of Cylindrical Algebraic Decomposition} (the DEWCAD Project).

{\small
	
}

\FloatBarrier
\newpage

%%%%%%%%%%%%%%%%%%%%%%%%%%%%%%%%%%
%%%%%%%%% Appendix A %%%%%%%%%%%%%
%%%%%%%%%%%%%%%%%%%%%%%%%%%%%%%%%%

\appendix

\section{Fully open networks with no positive feedback loop}
\label{sec:appendix_non_multi_classes_using_feedback_loop_results}

In this Appendix we list 45 fully open networks with no positive feedback loops, thus by Theorem~\ref{thm:positive_feedback_loops} they are not multistationary. These networks are used as bases to generate more non-multistationary networks following Proposition~\ref{prop:changing_sc_without_loosing_NPFL} for the training dataset of GAT in Section~\ref{sec:GAT}. 

We give the full DSR-graph for just one of the networks, Network~\eqref{eq:NPFL_10}, below. Note that the vertices of species inflow reactions have degree one so they can not be in any loop. The nodes of species outflow reactions can only be in the trivial negative loop of length one. Therefore, since species flow reactions can not be used to create positive feedback loops, we omit the vertices and edges corresponding to these reactions in the rest of this section. We also switch to the signed DSR-graph and instead of putting labels $+$ and $-$, we colour the negative edges with blue and keep positive edges as black.  

\noindent For the other networks the reactions $X_i\ce{<=>[k_{2i+r}][k_{2i+r-1}]}0$, $i=1,\dots,n$ where $n$ is the number of species and $r$ is the number of non-species-flow reactions are also omitted to save space, but the reader should remember that all these networks are fully open and thus these reactions are actually present.

%%%%%%%%%%%%%%%%
\begin{figure}[ht!]
	\centering
	% [inline block 0: 10 envs, 73958 chars -> data_tex | \begin{tabular}{cc} 		\resizebox{0.45\textwidth}{!}{...]

		\end{array}
	\end{equation}

\end{figure}

\FloatBarrier

%%%%%%%%%%%%%%%%%%%%%%%%%%%%%%%%%%
%%%%%%%%% Appendix B %%%%%%%%%%%%%
%%%%%%%%%%%%%%%%%%%%%%%%%%%%%%%%%%

\section{Validating examples}
\label{sec:appendix_validating ntworks}

This appendix contains the 32 validating CRN examples used in Section~\ref{sec:GAT}. The first 21 networks, namely \eqref{eq:Val1}$-$\eqref{eq:Val21}, are non-multistationary and the remaining 11, \eqref{eq:Val22}$-$\eqref{eq:Val32}, are multistationary. Similar to Appendix~\ref{sec:appendix_non_multi_classes_using_feedback_loop_results}, the species flows reactions are omitted.

\begin{figure}[ht!]
	\centering
	\begin{tabular}{cc}
		\begin{minipage}{0.5\linewidth}
			\begin{equation}
				\label{eq:Val1}
				\begin{array}{c}
					X_1 \ce{->[k_1]} X_2\\
					X_2 \ce{->[k_2]} 2X_1\\
				\end{array}
			\end{equation}
		\end{minipage}
		&
		\begin{minipage}{0.5\linewidth}
			\begin{equation}
				\label{eq:Val2}
				\begin{array}{c}
					2X_1 \ce{->[k_1]} 2X_2\\
					2X_2 \ce{->[k_2]} X_1\\
				\end{array}
			\end{equation}
		\end{minipage}
	\end{tabular}
\end{figure}

\begin{figure}[ht!]
	\centering
	\begin{tabular}{cc}
		\begin{minipage}{0.5\linewidth}
			\begin{equation}
				\label{eq:Val3}
				\begin{array}{c}
					X_1 \ce{->[k_1]} X_2\\
					2X_1 \ce{->[k_2]} X_2\\
				\end{array}
			\end{equation}
		\end{minipage}
		&
		\begin{minipage}{0.5\linewidth}
			\begin{equation}
				\label{eq:Val4}
				\begin{array}{c}
					4X_1 \ce{->[k_1]} X_2\\
					X_2 \ce{->[k_2]} 2X_1\\
				\end{array}
			\end{equation}
		\end{minipage}
	\end{tabular}
\end{figure}

\begin{figure}[ht!]
	\centering
	\begin{tabular}{cc}
		\begin{minipage}{0.5\linewidth}
			\begin{equation}
				\label{eq:Val5}
				\begin{array}{c}
					X_1 \ce{->[k_1]} X_2\\
					X_1 + X_2 \ce{->[k_2]} 2X_3\\
				\end{array}
			\end{equation}
		\end{minipage}
		&
		\begin{minipage}{0.5\linewidth}
			\begin{equation}
				\label{eq:Val6}
				\begin{array}{c}
					X_1 \ce{->[k_1]} 2X_2\\
					X_1 + X_2 \ce{->[k_2]} 2X_3\\
				\end{array}
			\end{equation}
		\end{minipage}
	\end{tabular}
\end{figure}

\begin{figure}[ht!]
	\centering
	\begin{tabular}{cc}
		\begin{minipage}{0.5\linewidth}
			\begin{equation}
				\label{eq:Val7}
				\begin{array}{c}
					X_1 + X_2 \ce{->[k_1]} X_3\\
					X_2 + X_3 \ce{->[k_2]} X_1\\
				\end{array}
			\end{equation}
		\end{minipage}
		&
		\begin{minipage}{0.5\linewidth}
			\begin{equation}
				\label{eq:Val8}
				\begin{array}{c}
					X_1 \ce{<=>[k_1][k_2]} 2X_2\\
					X_1 + X_2 \ce{->[k_3]} 2X_3
				\end{array}
			\end{equation}
		\end{minipage}
	\end{tabular}
\end{figure}

\begin{figure}[ht!]
	\centering
	\begin{tabular}{cc}
		\begin{minipage}{0.5\linewidth}
			\begin{equation}
				\label{eq:Val9}
				\begin{array}{c}
					X_1 \ce{->[k_1]} 2X_2\\
					X_1 + X_2 \ce{<=>[k_2][k_3]} 2X_3
				\end{array}
			\end{equation}
		\end{minipage}
		&
		\begin{minipage}{0.5\linewidth}
			\begin{equation}
				\label{eq:Val10}
				\begin{array}{c}
					X_1 + X_2 \ce{->[k_1]} 2X_3\\
					X_1 \ce{->[k_2]} 2X_1 + X_2\\
					X_3 \ce{->[k_3]} 2X_2\\
				\end{array}
			\end{equation}
		\end{minipage}
	\end{tabular}
\end{figure}

\begin{figure}[ht!]
	\centering
	\begin{tabular}{cc}
		\begin{minipage}{0.5\linewidth}
			\begin{equation}
				\label{eq:Val11}
				\begin{array}{c}
					X_1 \ce{<=>[k_1][k_2]} 2X_2\\
					X_1 + X_2 \ce{<=>[k_3][k_4]} 2X_3
				\end{array}
			\end{equation}
		\end{minipage}
		&
		\begin{minipage}{0.5\linewidth}
			\begin{equation}
				\label{eq:Val12}
				\begin{array}{c}
					X_1 \ce{->[k_1]} X_2\\
					X_1 \ce{->[k_2]} X_3\\
					X_1 + X_2 + X_3 \ce{->[k_3]} X_1\\
					X_2 + X_3 \ce{->[k_4]} 0\\
				\end{array}
			\end{equation}
		\end{minipage}
	\end{tabular}
\end{figure}

\begin{figure}[ht!]
	\centering
	\begin{tabular}{cc}
		\begin{minipage}{0.5\linewidth}
			\begin{equation}
				\label{eq:Val13}
				\begin{array}{c}
					X_1 \ce{->[k_1]} X_2 + X_3\\
					X_2 + X_3 + X_4 \ce{->[k_2]} 0\\
					X_1 + X_4 \ce{->[k_3]} 0\\
				\end{array}
			\end{equation}
		\end{minipage}
		&
		\begin{minipage}{0.5\linewidth}
			\begin{equation}
				\label{eq:Val14}
				\begin{array}{c}
					X_1 + X_2 \ce{->[k_1]} X_3\\
					X_2 + X_3 \ce{->[k_2]} X_4\\
					X_1 + X_4 \ce{->[k_3]} 0\\
				\end{array}
			\end{equation}
		\end{minipage}
	\end{tabular}
\end{figure}

\begin{figure}[ht!]
	\centering
	\begin{tabular}{cc}
		\begin{minipage}{0.5\linewidth}
			\begin{equation}
				\label{eq:Val15}
				\begin{array}{c}
					X_1 \ce{->[k_1]} X_2\\
					X_2 \ce{->[k_2]} X_3\\
					X_3 \ce{->[k_3]} X_4\\
					X_4 \ce{->[k_4]} X_1\\
				\end{array}
			\end{equation}
		\end{minipage}
		&
		\begin{minipage}{0.5\linewidth}
			\begin{equation}
				\label{eq:Val16}
				\begin{array}{c}
					X_1 \ce{->[k_1]} X_2\\
					X_2 \ce{->[k_2]} X_3\\
					X_3 \ce{->[k_3]} X_4\\
					X_1 + X_4 \ce{->[k_4]} X_4\\
				\end{array}
			\end{equation}
		\end{minipage}
	\end{tabular}
\end{figure}

\begin{figure}[ht!]
	\centering
	\begin{tabular}{cc}
		\begin{minipage}{0.5\linewidth}
			\begin{equation}
				\label{eq:Val17}
				\begin{array}{c}
					X_1 \ce{->[k_1]} X_2\\
					X_2 \ce{->[k_2]} X_3\\
					X_1 + X_4 \ce{->[k_3]} X_1\\
					X_1 + X_3 \ce{->[k_4]} X_4\\
				\end{array}
			\end{equation}
		\end{minipage}
		&
		\begin{minipage}{0.5\linewidth}
			\begin{equation}
				\label{eq:Val18}
				\begin{array}{c}
					X_1 \ce{->[k_1]} X_3\\
					X_1 + X_2 \ce{->[k_2]} X_2 + X_3\\
					X_1 + 2X_2 \ce{->[k_3]} 2X_2 + X_3\\
					X_1 + 3X_2 \ce{->[k_4]} 3X_2 + X_3\\
					X_2 \ce{->[k_5]} X_4\\
				\end{array}
			\end{equation}
		\end{minipage}
	\end{tabular}
\end{figure}

\begin{figure}[ht!]
	\centering
	\begin{tabular}{cc}
		\begin{minipage}{0.5\linewidth}
			\begin{equation}
				\label{eq:Val19}
				\begin{array}{c}
					X_1 \ce{->[k_1]} X_2 + X_3 + X_4\\
					X_3 + X_4 \ce{->[k_2]} X_5\\
					X_2 + X_5 \ce{->[k_3]} X_2\\
				\end{array}
			\end{equation}
		\end{minipage}
		&
		\begin{minipage}{0.5\linewidth}
			\begin{equation}
				\label{eq:Val20}
				\begin{array}{c}
					X_1 + X_2 + X_3 \ce{->[k_1]} X_3 + X_4 + X_5\\
					X_2 + X_4 \ce{->[k_2]} X_3\\
					X_1 + 2X_5 \ce{->[k_3]} X_1\\
				\end{array}
			\end{equation}
		\end{minipage}
	\end{tabular}
\end{figure}

\begin{figure}[ht!]
	\centering
	\begin{tabular}{cc}
		\begin{minipage}{0.5\linewidth}
			\begin{equation}
				\label{eq:Val21}
				\begin{array}{c}
					X_1 + X_2 \ce{->[k_1]} X_3\\
					X_4 + X_5 \ce{->[k_2]} X_6\\
					X_3 + X_6 \ce{->[k_3]} X_1 + X_6\\
				\end{array}
			\end{equation}
		\end{minipage}
		&
		\begin{minipage}{0.5\linewidth}
			\begin{equation}
				\label{eq:Val22}
				\begin{array}{c}
					X_1 \ce{->[k_1]} X_2\\
					X_1 + X_2 \ce{->[k_2]} 2X_3\\
					2X_3 \ce{->[k_3]} 2X_1
				\end{array}
			\end{equation}
		\end{minipage}
	\end{tabular}
\end{figure}

\begin{figure}[ht!]
	\centering
	\begin{tabular}{cc}
		\begin{minipage}{0.5\linewidth}
			\begin{equation}
				\label{eq:Val23}
				\begin{array}{c}
					X_1 + X_2 \ce{->[k_1]} X_3 + X_4\\
					X_1 + X_4 \ce{->[k_2]} 2X_1\\
				\end{array}
			\end{equation}
		\end{minipage}
		&
		\begin{minipage}{0.5\linewidth}
			\begin{equation}
				\label{eq:Val24}
				\begin{array}{c}
					X_1 + X_2 \ce{->[k_1]} X_3 + X_4\\
					X_1 + X_3 \ce{->[k_2]} X_1 + X_4\\
					X_3 + X_4 \ce{->[k_3]} 2X_1\\
				\end{array}
			\end{equation}
		\end{minipage}
	\end{tabular}
\end{figure}

\begin{figure}[ht!]
	\centering
	\begin{tabular}{cc}
		\begin{minipage}{0.5\linewidth}
			\begin{equation}
				\label{eq:Val25}
				\begin{array}{c}
					X_1 + X_2 \ce{->[k_1]} X_1 + X_3\\
					2X_3 + X_4 \ce{->[k_2]} X_1 + X_2\\
					X_1 \ce{->[k_3]} X_4\\
				\end{array}
			\end{equation}
		\end{minipage}
		&
		\begin{minipage}{0.5\linewidth}
			\begin{equation}
				\label{eq:Val26}
				\begin{array}{c}
					X_1 \ce{->[k_1]} X_2\\
					X_2 \ce{->[k_2]} X_3\\
					X_1 + X_3 \ce{->[k_3]} 2X_4\\
					2X_4 \ce{->[k_4]} 2X_1\\
				\end{array}
			\end{equation}
		\end{minipage}
	\end{tabular}
\end{figure}

\begin{figure}[ht!]
	\centering
	\begin{tabular}{cc}
		\begin{minipage}{0.5\linewidth}
			\begin{equation}
				\label{eq:Val27}
				\begin{array}{c}
					X_1 \ce{->[k_1]} X_2 + X_3 + X_4\\
					X_3 + X_4 \ce{->[k_2]} X_5\\
					X_2 + X_5 \ce{->[k_3]} 2X_1\\
				\end{array}
			\end{equation}
		\end{minipage}
		&
		\begin{minipage}{0.5\linewidth}
			\begin{equation}
				\label{eq:Val28}
				\begin{array}{c}
					X_1 + X_2 \ce{->[k_1]}X_3\\
					X_3 + 2X_4 \ce{->[k_2]} X_5\\
					X_2 + X_5 \ce{->[k_3]} 2X_4 + X_5\\
				\end{array}
			\end{equation}
		\end{minipage}
	\end{tabular}
\end{figure}

\begin{figure}[ht!]
	\centering
	\begin{tabular}{cc}
		\begin{minipage}{0.5\linewidth}
			\begin{equation}
				\label{eq:Val29}
				\begin{array}{c}
					X_1 + X_2 \ce{->[k_1]} X_2 + X_3\\
					X_3 \ce{->[k_2]} X_4\\
					2X_4 + X_5 \ce{->[k_3]} 2X_3\\
					X_2 + X_5 \ce{->[k_4]} 0\\
				\end{array}
			\end{equation}
		\end{minipage}
		&
		\begin{minipage}{0.5\linewidth}
			\begin{equation}
				\label{eq:Val30}
				\begin{array}{c}
					X_1 + X_2 \ce{->[k_1]} X_3\\
					X_1 + X_3 \ce{->[k_2]} X_4\\
					X_4 + X_5 \ce{->[k_3]} 2X_2\\
					X_3 + X_5 \ce{->[k_4]} 0\\
				\end{array}
			\end{equation}
		\end{minipage}
	\end{tabular}
\end{figure}

\begin{figure}[ht!]
	\centering
	\begin{tabular}{cc}
		\begin{minipage}{0.5\linewidth}
			\begin{equation}
				\label{eq:Val31}
				\begin{array}{c}
					X_1 \ce{->[k_1]} X_1 + X_2\\
					2X_2 \ce{->[k_2]} X_1\\
					X_2 + X_3 \ce{->[k_3]} X_4 + X_5\\
					X_4 + X_5 \ce{->[k_4]} 2X_3
				\end{array}
			\end{equation}
		\end{minipage}
		&
		\begin{minipage}{0.5\linewidth}
			\begin{equation}
				\label{eq:Val32}
				\begin{array}{c}
					X_1 + X_2 \ce{->[k_1]} X_3 + X_4\\
					X_2 + X_3 \ce{->[k_2]} X_5\\
					X_1 + X_4 \ce{->[k_3]} X_6\\
					2X_3 + X_5 \ce{->[k_4]} X_5\\
					2X_4 + X_6 \ce{->[k_5]} X_6\\
					X_5 + X_6 \ce{->[k_6]} 2X_1\\
				\end{array}
			\end{equation}
		\end{minipage}
	\end{tabular}
\end{figure}

\end{document}